\documentclass[twocolumn,footinbib]{revtex4}

\usepackage{amsmath}
\usepackage{amssymb}
\usepackage{url}
\usepackage{graphicx}

\usepackage{amsthm}
\usepackage{enumerate}
\newtheorem{theorem}{Theorem}[section]
\newtheorem{lemma}[theorem]{Lemma}

\newtheorem{remark}[theorem]{Remark}

\theoremstyle{plain}

 % cal

\newcommand*{\cF}{\mathcal{F}}

\newcommand*{\cH}{\mathcal{H}}

\newcommand*{\cM}{\mathcal{M}}

\newcommand*{\cS}{\mathcal{S}}
\newcommand*{\cU}{\mathcal{U}}

\newcommand*{\cV}{\mathcal{V}}

\newcommand*{\cX}{\mathcal{X}}

\newcommand*{\cZ}{\mathcal{Z}}

\newcommand*{\cHb}{\mathcal{\bar{H}}}
\newcommand*{\cZb}{\mathcal{\bar{Z}}}

\DeclareMathOperator*{\Sym}{Sym}

\newcommand*{\eps}{\varepsilon}
\newcommand*{\tr}{\mathsf{tr}}
\newcommand*{\id}{\openone}

\newcommand*{\ket}[1]{|#1\rangle}
\newcommand*{\bra}[1]{\langle #1|}
\newcommand*{\proj}[1]{\ket{#1}\bra{#1}}
\newcommand*{\spr}[2]{\langle #1|#2 \rangle}

\newcommand*{\states}{\cS}

\newcommand*{\Expect}{\mathrm{E}}

\newcounter{fnnumber}

\begin{document}

\title{A de Finetti representation theorem for infinite dimensional
  quantum systems and applications to quantum cryptography}

% \title{A de Finetti representation theorem for infinite-dimensional
%  quantum systems \\ {\small or} \\ Security of QKD protocols with
%  infinite-dimensional information carriers against general attacks}

\author{R. Renner} \affiliation{Institute for Theoretical Physics, ETH
  Zurich, CH-8093 Zurich, Switzerland.}

\author{J.I. Cirac}
\affiliation{Max-Planck-Institut f\"ur Quantenoptik,
Hans-Kopfermann-Str.~1, D-85748 Garching, Germany.}

\begin{abstract}
  According to the quantum de Finetti theorem, if the state of an
  $N$-partite system is invariant under permutations of the subsystems
  then it can be approximated by a state where almost all subsystems
  are identical copies of each other, provided $N$ is sufficiently
  large compared to the dimension of the subsystems. The de Finetti
  theorem has various applications in physics and information theory,
  where it is for instance used to prove the security of quantum
  cryptographic schemes. Here, we extend de Finetti's theorem, showing
  that the approximation also holds for infinite dimensional systems,
  as long as the state satisfies certain experimentally verifiable
  conditions. This is relevant for applications such as quantum key
  distribution (QKD), where it is often hard|or even impossible|to
  bound the dimension of the information carriers (which may be
  corrupted by an adversary). In particular, our result can be applied
  to prove the security of QKD based on weak coherent states or
  Gaussian states against general attacks.
\end{abstract}

\pacs{03.67.-a,03.67.Dd}

\date{September 12, 2008}

\maketitle

\section{Introduction}

Systems studied in physics often consist of a large number of
identical subsystems. Examples include any type of matter with the
individual molecules as subsystems, or a light field consisting of
many modes. Similarly, in the context of quantum information
processing, one typically considers settings involving a large number
of identical information carriers, such as the photons sent over an
optical fiber.  In all these cases, the state of the overall system is
described by a density operator on a product space $\cH^{\otimes N}$.

A main difficulty when studying large composite systems is that their
dimension, and hence the number of parameters needed to describe their
state, grows exponentially in the number $N$ of subsystems. This is
particularly problematic if one wants to prove that a certain
statement holds for all possible states of the system. In the context
of quantum information processing, the necessity of such proofs
arises, for instance, when analyzing the security of cryptographic
protocols.  Here, an adversary may maliciously manipulate the
information carriers, and security must be guaranteed for any
resulting state.

The analysis of large composite quantum systems can be vastly
simplified under certain symmetry assumptions, using a quantum version
of de Finetti's classical representation theorem~\cite{deFine37}
proposed recently in~\cite{Renner05,Renner07}. The theorem states that
multi-partite density operators which are invariant under permutations
of the subsystems are approximated by convex combinations of density
operators which have \emph{i.i.d.\ structure} $\sigma^{\otimes N}$ on
most subsystems~\footnote{The term \emph{i.i.d.}\ (for
  \emph{independent and identically distributed}) is traditionally
  used in probability and information theory for random variables
  $X_1, \ldots, X_N$ whose probability mass (or density) function is
  of the form~$P_{X_1 \cdots X_N} = P_X \times \cdots \times P_X$. We
  use it here for multi-partite quantum states of the form
  $\sigma^{\otimes N}$.}. I.i.d.\ states can be easily parametrized
(they are characterized by the state $\sigma$ of a single subsystem),
and a huge variety of tools are available to handle them, particularly
in the area of information theory~\cite{NieChu00}.

In information-theoretic applications, permutation symmetry of the
states can often be assumed to hold without loss of generality due to
inherent symmetries of the underlying problem or the processing
scheme. An important example, which we are going to study in more
detail, is \emph{quantum key distribution
  (QKD)}~\cite{BenBra84,Ekert91}. Roughly speaking, QKD is the art of
establishing a secret key between two distant parties, traditionally
called \emph{Alice} and \emph{Bob}, connected only by an insecure
quantum channel~\footnote{In addition, Alice and Bob need to be able
  to exchange classical messages authentically or, alternatively,
  share a short initial key.}.  Most QKD protocols have the property
that $N$ signals are exchanged sequentially between Alice and Bob, but
the order in which they are transmitted is irrelevant (as long as
Alice and Bob coordinate their communication). One can thus
equivalently assume that Alice and Bob reorder the signals according
to a randomly chosen permutation~\footnote{\label{ftn:perm} We
  emphasize here that this random permutation is only used in the
  theoretical analysis, but need not be implemented in the actual
  protocol, as shown
  in~\cite{ScaRen08}.}. \setcounter{fnnumber}{\thefootnote}
Consequently, even if an adversary manipulates the signals in an
arbitrarily malicious way, the $N$-partite density operator describing
Alice and Bob's information is permutation invariant.

The quantum de Finetti theorem now implies that, for assessing the
security of a QKD protocol, it is sufficient to consider the special
case where the state held by Alice and Bob (after communication over
the insecure channel) has i.i.d.\ structure.  This, however, exactly
corresponds to the situation arising in a \emph{collective
  attack}~\cite{BihMor97a,BihMor97b}, where the adversary is bound to
manipulate each of the transmitted signals independently and
identically. For a large class of protocols, security against
collective attacks is well understood and explicit formulas for the
key rate are known (see, e.g., \cite{DevWin05} for the rate of key
distillation protocols with one-way communication).

The reduction of security proofs to the special case of collective
attacks, however, only works for QKD schemes that use low-dimensional
signals.  This is because the de Finetti representation for states on
product spaces $\cH^{\otimes N}$ is subject to the constraint that the
dimension $d$ of the subsystems $\cH$ be sufficiently smaller than the
number $N$ of subsystems. In particular, the de Finetti representation
generally fails if $\cH$ is infinite-dimensional. (There exist
explicit examples of permutation invariant states $\rho^N$ on
$\cH^{\otimes N}$, with $\dim(\cH) = N$, such that any reduced state
$\rho^{k}$ on $\cH^{\otimes k}$, for $k \geq 2$, is highly entangled
and, hence, cannot be approximated by a convex combination of i.i.d.\
states~\cite{CKMR06}.)

% This constraint obviously limits the range of problems to which the de
% Finetti theorem can be applied. For instance, the proof
% of~\cite{Renner07} that security of a QKD against collective attacks
% implies security against general attacks (as sketched above) is only
% valid for \emph{discrete variable} QKD schemes, where the information
% carriers are finite dimensional.  In fact, for \emph{continuous
%   variable} schemes, which use signals on infinite-dimensional
% systems, no security proof against the most general attacks has been
% known to this date.

Here, we show that the restriction of the de Finetti representation to
low-dimensional spaces $\cH$ can be circumvented under certain
experimentally verifiable conditions. More precisely, we prove that
for any permutation invariant state on a (possibly
infinite-dimensional) system $\cH^{\otimes N}$, the reduced state on
$\cH^{\otimes N'}$, for some $N' \approx N$, is approximated by a
mixture of density operators with i.i.d.\ structure, provided that the
outcomes of a measurement applied to a few subsystems lie within a
given range. As a specific example, we consider measurements with
respect to two canonical observables $X$ and $Y$ on $\cH = {\rm
  L}^2(\mathbb{R})$. The criterion then is that the outcomes of both
the $X$ and the $Y$ measurements have small absolute value.

In practical applications, this criterion is often easily
verifiable. For example, in \emph{continuous variable quantum
  cryptography}~\cite{Ralph99,Hiller00,Reid00,GotPre01,CeLeVa01,SiKoLe02,GroGra02,GAWBCG03,WLBSRL04},
which uses signals in $\cH = {\rm L}^2(\mathbb{R})$, measurements with
respect to two canonical observables $X$ and $Y$ are usually already
part of the protocol. Our extended version of de Finetti's theorem
then implies that these protocols are secure against the most general
attacks, provided they are secure against collective attacks. The
latter type of security is already proved for many practical
continuous variable schemes~(see, e.g., \cite{GarCer06}, which is
based on~\cite{WoGiCi06}, and~\cite{NaGrAc06}).

\bigskip

The remainder of this paper is organized as follows. After introducing
some notation and terminology, we start in Section~\ref{sec:tech} with
the proof of the technical lemmas and theorems. These are the building
blocks for the derivation of our main claim that permutation invariant
states are approximated by almost i.i.d.\ states, as described in
Section~\ref{sec:combination}. (For a first reading, one may skip
Section~\ref{sec:tech} and directly start with
Section~\ref{sec:combination}, where it is shown how the individual
technical claims are combined.)  Finally, we discuss how our result
can be applied to prove the security of QKD schemes
(Section~\ref{sec:application}).

\section{Notation and definitions} \label{sec:notation}

\subsection{Symmetry and permutation invariance}

Let $S_n$ be the set of permutations on $\{1, \ldots, n\}$ and let
$\cH$ be a Hilbert space. The \emph{symmetric subspace of
  $\cH^{\otimes n}$}, denoted $\Sym^n(\cH)$, consists of all vectors
$\Phi \in \cH^{\otimes n}$ such that $\pi \Phi = \Phi$ for all $\pi
\in S_n$. The projector on $\Sym^n(\cH)$ can be written as
\begin{align} \label{eq:symproj}
  P_{\Sym^n(\cH)} = \frac{1}{n!} \sum_{\pi \in S_n} \pi \ .
\end{align}

We denote by $\states(\cH)$ the set of density operators on the
Hilbert space $\cH$. An operator $\rho^{n} \in \states(\cH^{\otimes
  n})$ is said to be \emph{permutation invariant} if $\pi \rho^{n}
\pi^\dagger = \rho^{n}$ for all permutations~$\pi$.

\subsection{Restricted symmetric subspaces}

Let $\cHb$ be a subspace of $\cH$ and let $k, n \in \mathbb{N}$.  We
define $P^{k+n}_{\cHb^{\otimes n}}$ as the projector onto the subspace
of $\cH^{\otimes k+n}$ spanned by all vectors in $\pi(\cH^{\otimes k}
\otimes \cHb^{\otimes n})$, for any $\pi \in S_{k+n}$. The projector
$P^{k+n}_{\cHb^{\otimes n}}$ can be decomposed into projectors $P_0 =
P_{\cHb}$ and $P_1 = P_{\cHb^\perp}$ onto $\cHb$ and its orthogonal
subspace $\cHb^\perp$, respectively,
\begin{align} \label{eq:Pkndef}
  P^{k+n}_{\cHb^{\otimes n}} = \sum_{\substack{\mathbf{b} \in \{0,1\}^{k+n}\\f_{\mathbf{b}} \leq \frac{k}{k+n}}} P_{b_1} \otimes \cdots \otimes P_{b_{k+n}} \ ,
\end{align}
where the sum ranges over all bitstrings $\mathbf{b} = (b_1, \ldots,
b_{k+n}) \in \{0,1\}^{k+n}$ whose \emph{relative frequency of $1$s},
\begin{align} \label{eq:relfreq}
  f_{\mathbf{b}} := \frac{1}{k} \sum_{k} b_i \ ,
\end{align}
is not larger than $\frac{k}{k+n}$.

Because $P^{k+n}_{\cHb^{\otimes n}}$ is permutation invariant it
commutes with any $\pi \in S_{k+n}$ and, hence, also with the
projector $P_{\Sym^{k+n}(\cH)}$ onto the symmetric subspace of
$\cH^{\otimes k+n}$ (see~\eqref{eq:symproj}). This implies that the
product $P^{k+n}_{\cHb^{\otimes n}} P_{\Sym^{k+n}(\cH)}$ is a
projector.  In the following, we denote by $\Sym^{k+n}(\cH,
\cHb^{\otimes n})$ the support of this projector. The space
$\Sym^{k+n}(\cH, \cHb^{\otimes n})$ thus consists of all symmetric
vectors that can be written as superpositions of vectors of the form
$\pi(\Phi \otimes \bar{\Phi})$, for some $\Phi \in \Sym^k(\cH)$,
$\bar{\Phi} \in \Sym^n(\cHb)$, and $\pi \in S_{k+n}$.

In the special case where $\cHb = \mathrm{span}\{\nu\}$ is the vector
space spanned by a single vector $\nu \in \cH$, we also write
$\Sym^{k+n}(\cH, \nu^{\otimes n})$ instead of $\Sym^{k+n}(\cH,
\mathrm{span}\{\nu\}^{\otimes n})$ and call its elements
\emph{$\binom{k+n}{n}$-i.i.d.\ vectors (along $\nu$)}. We also say
that a density operator $\rho^{k+n}$ is \emph{almost i.i.d.}\ if its
support is contained in $\Sym^{k+n}(\cH, \nu^{\otimes n})$, for some
$k \ll n$.

\subsection{Measurements}

Let $U$ and $V$ be nonnegative operators on a Hilbert space $\cH$
satisfying $U \leq \id$ and $V \leq \id$. We define the function
$\gamma_{U \to V}$ on $[0,1]$ by
\begin{align} \label{eq:gammadef}
  \gamma_{U \to V}(\delta) := \sup \{\tr(V \sigma) : \, \sigma \in \states(\cH); \, \tr(U \sigma) \leq \delta\} \ .
\end{align}
If $U$ and $V$ are POVM elements then $\gamma_{U \to V}(\delta)$
corresponds to the maximum probability of obtaining outcome $V$ when
measuring a state $\sigma$ for which the probability of outcome~$U$ is
at most~$\delta$.

\section{Technical statements} \label{sec:tech}

\subsection{Measurement statistics}

Let $\cU = \{U_0, U_1\}$ and $\cV = \{V_0, V_1\}$ be two binary POVMs
on $\cH$ with the property that $\gamma_{U_1 \to V_1}(\delta)$ is
small for small $\delta$. In other words, for any state $\sigma$,
outcome~$1$ of measurement $\cV$ has small probability whenever
outcome~$1$ of measurement $\cU$ has small probability. Intuitively,
we would then expect that the following holds. If $k$ subsystems of a
$(k+n)$-partite permutation invariant state are measured according to
$\cU$, resulting in a low number of outcomes~$1$, then the number of
outcomes~$1$ when measuring the $n$ remaining subsystems according to
$\cV$ is small, too. The following lemma makes this intuition more
precise.

\begin{lemma} \label{lem:probbound} Let $\cU = \{U_0, U_1\}$ and $\cV
  = \{V_0, V_1\}$ be POVMs on $\cH$, let $n \geq 2 k$, and let $(X_1,
  \ldots, X_{k+n})$ be the $(k+n)$-partite classical outcome of the
  measurement $\cU^{\otimes k} \otimes \cV^{\otimes n}$ applied to any
  permutation invariant $\rho^{k+n} \in \states(\cH^{\otimes
    k+n})$. Then, for any $\delta > 0$,
  \begin{align*}
    \Pr\bigl[f_{X_{k+1} \cdots X_{k+n}} > \gamma_{U_1 \to V_1}(f_{X_{1} \cdots X_{k}} + \delta) + \delta \bigr] \leq 8 k^{\frac{3}{2}} e^{-k\delta^2} \ ,
  \end{align*}
  where $f_{\mathbf{X}}$ denotes the relative frequency of $1$s in
  $\mathbf{X}$ (see~\eqref{eq:relfreq}).
\end{lemma}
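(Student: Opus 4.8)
The plan is to bridge the two measurements via the elementary single-subsystem inequality $\tr(V_1\sigma)\le\gamma_{U_1\to V_1}(\tr(U_1\sigma))$, which is immediate from~\eqref{eq:gammadef} (take $\delta=\tr(U_1\sigma)$), and to control the unavoidable correlations using permutation invariance together with classical concentration. Along the way I will use that $\gamma:=\gamma_{U_1\to V_1}$ is non-decreasing and concave: if $\tr(U_1\sigma_i)\le\delta_i$ for each $i$, then $\sum_i\lambda_i\sigma_i$ is admissible for $\sum_i\lambda_i\delta_i$, so $\sum_i\lambda_i\gamma(\delta_i)\le\gamma(\sum_i\lambda_i\delta_i)$.

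First I would set up a classical sampling statement: for any permutation-invariant state on $\cH^{\otimes m}$ and any \emph{single} binary POVM $\cW$, the outcome string of $\cW^{\otimes m}$ is exchangeable, so conditioned on its total number of $1$s it is uniform; hence the relative frequencies of $1$s in a block of size $a$ and in the complementary block of size $b$ differ by more than $\delta$ only with probability at most a hypergeometric (Serfling-type) tail, $O(e^{-c\min(a,b)\delta^2})$. This is the only place the hypothesis $n\ge 2k$ enters: it guarantees that all the blocks I need have size $\ge k$, so the exponent is of order $k\delta^2$.

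Next I would condition on the outcomes $x_1,\dots,x_k$ of $\cU^{\otimes k}$ on the first block. Because $\rho^{k+n}$ is $S_{k+n}$-invariant and this measurement acts only on the first $k$ factors, the post-measurement state $\rho^n_{\mathbf x}$ on the remaining $n$ subsystems is $S_n$-invariant, and likewise the conditional state of the not-yet-measured subsystems after further peeling some of them off with $\cV$ (a permutation of the untouched subsystems commutes both with $\rho^{k+n}$ and with everything measured so far). Applying the sampling statement to the hypothetical in which the untouched subsystems are read out with $\cU$ instead, and taking conditional expectations, shows that the single-subsystem $\cU$-probability $\tr(U_1\tau^{(\ell)})$ of each such conditional state $\tau^{(\ell)}$ stays within $\delta$ of the observed frequency $f_{x_1\cdots x_k}$, except on a small-probability event. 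Then I would read out the remaining $n$ subsystems with $\cV$ one by one: the partial sums of $X_{k+\ell}-\tr(V_1\tau^{(\ell-1)})$ form a bounded-difference martingale, so Azuma--Hoeffding gives $f_{X_{k+1}\cdots X_{k+n}}\le\frac1n\sum_{\ell}\tr(V_1\tau^{(\ell-1)})+\delta$ with high probability; bounding each term by $\gamma(\tr(U_1\tau^{(\ell-1)}))$, pulling the average inside $\gamma$ by concavity, and inserting $\tr(U_1\tau^{(\ell-1)})\le f_{x_1\cdots x_k}+\delta$ yields $f_{X_{k+1}\cdots X_{k+n}}\le\gamma(f_{x_1\cdots x_k}+O(\delta))+O(\delta)$. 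A union bound over the at most $k+1$ values of $f_{x_1\cdots x_k}$ and over the (polynomially many in $k$) error events collected above, plus a constant rescaling of $\delta$ that lets the somewhat better exponents obtained in the intermediate steps absorb the polynomial prefactor, produces the stated bound $8k^{3/2}e^{-k\delta^2}$.

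The crux — and the main obstacle — is that $\cU$ and $\cV$ are incompatible, so one cannot directly compare the outcomes recorded on the first block with those recorded on the second. The device that circumvents this is that conditioning a permutation-invariant state on measurement results of some subsystems leaves the remaining subsystems permutation invariant, so the single-subsystem conditional states are well-defined and, by the classical sampling above, their $\cU$-statistics stay pinned to the frequency actually observed on the first block; the single-subsystem $\gamma$-inequality then propagates. Carrying out this pinning uniformly along the entire read-out of the second block while keeping the effective sample size — hence the exponent — of order $k$ is the delicate bookkeeping, and is exactly where the hypothesis $n\ge 2k$ is spent.
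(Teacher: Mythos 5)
Your skeleton matches the paper's in its main ingredients (a hypergeometric/Serfling-type sampling lemma for exchangeable bits, the fact that conditioning a permutation-invariant state on measurements of some subsystems leaves the rest permutation invariant, and the single-subsystem inequality $\tr(V_1\sigma)\le\gamma_{U_1\to V_1}(\tr(U_1\sigma))$), but the sequential execution has a genuine gap exactly at the step you flag as ``delicate bookkeeping''. Your martingale argument needs the pinning $\tr(U_1\tau^{(\ell-1)})\le f_{X_1\cdots X_k}+\delta$ to hold (essentially) at \emph{every} step $\ell=1,\ldots,n$ of the $\cV$ read-out. Each such event is controlled by one application of the sampling lemma, with failure probability of order $k^{3/2}e^{-k\delta^2}$ at best --- the exponent is capped at order $k\delta^2$ because the accuracy of the pinning is limited by the size $k$ of the observed $\cU$-sample, independently of $\ell$. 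There are $n$ such events, not ``polynomially many in $k$'': since $n/k$ is unbounded, the union bound yields a prefactor of order $n$ that cannot be absorbed by a constant rescaling of $\delta$ (the Azuma step does give an exponent of order $n\delta^2$, but only for the $\cV$-martingale, not for the $\cU$-pinning, which is the bottleneck). Moreover, for $\ell>n-k$ fewer than $k$ subsystems remain untouched, so the sampling lemma (which needs the unobserved block to be at least as large as the observed one) no longer pins $\tr(U_1\tau^{(\ell-1)})$ with exponent $k\delta^2$ at all; and since $k/n$ can be as large as $1/2$, these last steps cannot simply be discarded from the frequency. Nonnegative-supermartingale or maximal-inequality tricks (e.g.\ applying Jensen to the concave $\gamma$) only give polynomial, not exponential, control of the trajectory, so I do not see how your route produces an $n$-independent bound of the stated form.

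The paper avoids the uniform-in-$\ell$ pinning altogether. It splits the $\cV$-block into two halves of size $n/2\ge k$ and, for each half, considers the \emph{single} conditional state $\rho^1_{|\mathbf{X}^k\mathbf{X}^{n/2}}$ of one untouched subsystem given both the $\cU$-outcomes on the $k$-block and the $\cV$-outcomes on that half. Two applications of the sampling lemma (one conditioning on $\mathbf{X}^{n/2}$ first, one conditioning on $\mathbf{X}^k$ first) pin $\tr(U_1\rho^1_{|\cdot})$ to $f_{\mathbf{X}^k}$ and $\tr(V_1\rho^1_{|\cdot})$ to $f_{\mathbf{X}^{n/2}}$, each up to $\delta$ and each except with probability $2k^{3/2}e^{-k\delta^2}$; the definition of $\gamma_{U_1\to V_1}$ applied to this one witness state (no concavity or monotonicity of $\gamma$ is needed) then bounds $\Pr[f_{\mathbf{X}^{n/2}}>\gamma_{U_1\to V_1}(f_{\mathbf{X}^k}+\delta)+\delta]$ by $4k^{3/2}e^{-k\delta^2}$, and the identity $f_{X_{k+1}\cdots X_{k+n}}=\tfrac12 f_{\text{first half}}+\tfrac12 f_{\text{second half}}$ plus a union bound over the two halves gives the stated $8k^{3/2}e^{-k\delta^2}$. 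This is also where the hypothesis $n\ge 2k$ is actually spent: it guarantees each half has size at least $k$, which your per-step scheme does not ensure for the late read-outs.
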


Qualitatively, the statement of Lemma~\ref{lem:probbound} is a special
case of Lemma~4.1 of~\cite{ChReEk04}. For completeness, we give a
proof in the appendix, which also yields tighter bounds for the choice
of parameters we are interested in.

\subsection{Bounding the probability of projecting into a
  low-dimensional subspace}

In this section, we derive a bound on the quantity $\gamma_{U_1 \to
  V_1}$ for the case where $V_1$ corresponds to the predicate that a
measurement of $X^2 + Y^2$, for two canonical observables $X$ and $Y$
on ${\cal H}={\rm L}^2(\mathbb{R})$, is larger than a threshold $n_0$,
and where $U_1$ is the predicate that the outcome of a measurement
with respect to either $X^2$ or $Y^2$ is at least $\frac{n_0}{2}$.

For any Hermitian operator $Z$ and $z_0\in \mathbb{R}$ we define
$P^{Z\ge z_0}$ as the projector onto the subspace spanned by the
eigenspaces of $Z$ corresponding to (generalized) eigenvalues $z\ge
z_0$.

\begin{lemma} \label{lem:Nconstraint}
Let $X$ and $Y$ be two canonical operators ($[X,Y]=i$),
$n_0$ a positive integer, and define
\[
U_1:= \frac{1}{2}P^{X^2\ge n_0/2} + \frac{1}{2}P^{Y^2\ge n_0/2}
\; {\it and} \;
V_1:= P^{X^2+Y^2\ge n_0+1}
\]
Then $\gamma_{U_1\to V_1}(\delta) \le 4 \delta + \frac{4}{c_0\sqrt{\pi
    n_0}}e^{-n_0c_0^2}$, with $c_0=1-\frac{1}{\sqrt{2}}$.
\end{lemma}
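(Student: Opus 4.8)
The plan is to reduce the claim to a single operator inequality and then to prove that inequality by bounding how badly a "union bound" fails because $X^2$ and $Y^2$ do not commute. I would first use the variational form~\eqref{eq:gammadef}: for any $\sigma\in\states(\cH)$ with $\tr(U_1\sigma)\le\delta$ and any $\lambda\ge0$,
\[
\tr(V_1\sigma)=\tr\bigl((V_1-\lambda U_1)\sigma\bigr)+\lambda\,\tr(U_1\sigma)\le\lambda_{\max}(V_1-\lambda U_1)+\lambda\delta ,
\]
so that $\gamma_{U_1\to V_1}(\delta)\le\lambda\delta+\lambda_{\max}(V_1-\lambda U_1)$. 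Taking $\lambda=4$, for which $\lambda U_1=2P^{X^2\ge n_0/2}+2P^{Y^2\ge n_0/2}$, it suffices to establish the operator inequality
\[
P^{X^2+Y^2\ge n_0+1}\;\le\;2\,P^{X^2\ge n_0/2}+2\,P^{Y^2\ge n_0/2}+\frac{4}{c_0\sqrt{\pi n_0}}\,e^{-n_0c_0^2}\,\id .
\]

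Since $[X,Y]=i$, one has $X^2+Y^2=2N+1$ with $N=a^{\dagger}a$ and $a=(X+iY)/\sqrt2$, so the left-hand side is the projector onto Fock states $\ket m$ with $m\ge n_0/2$. Had $X^2$ and $Y^2$ commuted, this inequality would hold even with the factor $2$ replaced by $1$ and with no error term, because $x^2+y^2\ge n_0+1>n_0$ forces $x^2\ge n_0/2$ or $y^2\ge n_0/2$; the whole content is therefore to control the non-commutativity correction. Equivalently, after rearranging, one must show that a state whose position \emph{and} momentum both lie in $(-\sqrt{n_0/2},\sqrt{n_0/2})$ with high probability is, up to an exponentially small error, supported on Fock states with $m<n_0/2$. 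Geometrically this expresses that the phase-space square $\{|x|,|p|\le\sqrt{n_0/2}\}$, of area $2n_0$, is essentially inscribed in the energy disk $\{x^2+y^2\le n_0+1\}$, of area $\pi n_0>2n_0$: its corners sit at phase-space radius $\sqrt{n_0}$, just inside the circle of radius $\sqrt{n_0+1}$.

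To make this quantitative I would pass to matrix elements in the Fock (Hermite) basis and use the semiclassical density $|\psi_m(x)|^2\approx\frac1{\pi\sqrt{2m+1-x^2}}$: for $m\ge n_0/2$ and $|x|\le\sqrt{n_0/2}$ one has $2m+1-x^2\ge n_0/2$, hence $\psi_m$ has squared amplitude $\lesssim n_0^{-1/2}$ there and $\Pr_{\ket m}[X^2\ge n_0/2]$ is bounded below by a constant (in fact tending to $\tfrac12$) uniformly in $m\ge n_0/2$; the mirror-image estimate on the momentum side follows by conjugating with the Fourier transform, which fixes each $\psi_m$ up to a phase. The elementary operator inequality $\rho\le 2P\rho P+2(\id-P)\rho(\id-P)$ (valid for $\rho\ge0$ and any projector $P$), applied with $P=P^{X^2<n_0/2}$ and then with $P=P^{Y^2<n_0/2}$, is what converts these amplitude bounds — and, at the same time, the factor $2$ into the factor $4$ — into the operator inequality; it reduces matters to bounding a "box‑then‑box" sandwiched copy of $V_1$, which is small precisely because $P^{X^2<n_0/2}P^{Y^2<n_0/2}\psi$ is automatically almost localized in both quadrature boxes once it has non-negligible norm, the remaining off-diagonal Fock contributions being absorbed by Cauchy--Schwarz. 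The Gaussian tail $\tfrac1{c_0\sqrt{\pi n_0}}e^{-n_0c_0^2}$ — up to a constant the weight a coherent state of quadrature width $1/\sqrt2$ puts across a wall at distance $\sqrt{n_0}-\sqrt{n_0/2}=c_0\sqrt{n_0}$ — then emerges from summing the amplitude bounds against a Gaussian envelope, with exponent $c_0^2=\tfrac32-\sqrt2$.

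The main obstacle is exactly this last quantitative step. Because the two quadrature constraints are soft (high-probability rather than certain), one cannot work with exact supports — a nonzero function of $L^2(\mathbb R)$ cannot be compactly supported in both position and momentum — so the constraints on $X$ and on $Y$ must be exploited jointly rather than sequentially, and the box-versus-disk overlap must be tracked precisely enough to reach the exponent $c_0^2$. I expect the Fock-space bookkeeping, and the choice of an intermediate cut-off radius between $\sqrt{n_0/2}$ and $\sqrt{n_0+1}$, to demand the most care; the reduction to the operator inequality and the rewriting via the number operator are routine.
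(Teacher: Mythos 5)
Your reduction to the operator inequality $V_1\le 4U_1+\eps\,\id$ (via $\gamma_{U_1\to V_1}(\delta)\le\lambda\delta+\sup\mathrm{spec}(V_1-\lambda U_1)$ with $\lambda=4$) has the same skeleton as the paper's argument, and your phase-space picture is correct; the gap is in the mechanism you propose for the additive error. After the pinching $\rho\le 2P\rho P+2(\id-P)\rho(\id-P)$ with the \emph{sharp} spectral projectors $P=P^{X^2<n_0/2}$ and $Q=P^{Y^2<n_0/2}$, your plan needs the ``box-then-box'' piece $QPV_1PQ$ to have operator norm of order $e^{-c_0^2 n_0}$, and this is false: sharp quadrature cutoffs produce power-law, not Gaussian, tails. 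Concretely, take $\psi$ with $\hat\psi\propto\mathbf{1}_{[-A,A]}$, $A=\sqrt{n_0/2}$, so that $Q\psi=\psi$; then $PQ\psi$ is a sinc function truncated at $\pm A$, whose jump discontinuities of height $\Theta(n_0^{-3/4})$ force momentum-amplitude tails $\sim|p|^{-1}$, hence weight of order $n_0^{-2}$ beyond $|Y|\ge\sqrt{2n_0}$; since states supported on Fock levels $m<n_0/2$ (recall $X^2+Y^2=2N+1$) put only exponentially small weight there, one gets $\|V_1PQ\|\gtrsim n_0^{-1}$, i.e.\ only polynomially small. This is fatal not just for the stated constant $\frac{4}{c_0\sqrt{\pi n_0}}e^{-n_0c_0^2}$ but for the intended use in Lemma~\ref{lem:lowdimbound}, where $n_0$ is only logarithmic in $(k+n)/k$, so a $\mathrm{poly}(1/n_0)$ error term is useless. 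A secondary problem: applying the pinching in $X$ and then in $Y$ compounds the constants, giving $2P^{X^2\ge n_0/2}+4P^{Y^2\ge n_0/2}$ plus the sandwiched term, which exceeds $4U_1$; even with a Cauchy--Schwarz repair you only reach $4(1+t^{-1})\delta$ for finite $t$, not $4\delta$.

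The paper avoids exactly this trap by never composing sharp non-commuting truncations. It introduces the heterodyne-type operator $W_1=\frac1\pi\int_{|\alpha|^2\ge n_0}d\mu_\alpha\,\proj{\alpha}$, proves $V_1\le 2W_1$ by a Fock-diagonal computation with incomplete Gamma functions, and then realizes $W_1$, via a beam splitter with a vacuum ancilla, as a \emph{commuting} joint measurement of the two quadratures; the union bound is applied to the classical integration domain ($x^2+y^2\ge n_0$ implies $x^2\ge n_0/2$ or $y^2\ge n_0/2$), giving $W_1\le F(X)+F(Y)$ with $F$ a Gaussian-smeared indicator, and finally $F(X)\le P^{X^2\ge a^2}+F(a)$ with $a=\sqrt{n_0/2}$. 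The exponential error $e^{-c_0^2 n_0}$ is thus a Gaussian (vacuum-noise) tail, and the factor $4$ arises as $2\times2$ from $V_1\le 2W_1$ and the two-term union bound, not from pinching. To salvage your route you would have to replace the sharp projectors by Gaussian-smeared quadrature measurements, at which point you are essentially reconstructing the paper's beam-splitter argument.
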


\begin{proof}
  The proof consists of several steps. First, we define an operator
  $W_1$ and show that $V_1\le 2W_1$. Then we show that, up to a
  constant, $W_1$ is upper bounded by $2U_1$.

Let us start by defining
\begin{align*}
W_1:= \frac{1}{\pi}\int d\mu_\alpha \; |\alpha\rangle\langle\alpha|
\end{align*}
where $|\alpha\rangle$ denotes a coherent state and the integral is
extended to the complex plane with $|\alpha|^2\ge n_0$. By expanding
$W_1$ in the Fock basis, $\{|n\rangle_f\}_{n=0}^\infty$, one obtains
that $W_1=\sum q_n|n\rangle_f\langle n|$ with
$q_n=\Gamma(n+1,n_0)/\Gamma(n+1,0)$, where $\Gamma$ is the incomplete
Gamma function~\cite{Grad00}. Since $q_{n+1}\ge q_n>0$, we can write
$V_1\le q_{n_0}^{-1} W_1$, where
$q_{n_0}^{-1}=\Gamma(n_0+1,0)/\Gamma(n_0+1,n_0)<2$, which concludes
the first part.

For the second part, we first extend our Hilbert space to ${\cal
H}_1\otimes {\cal H}_2$, and show that we can write
\begin{equation}
\label{W1}
W_1 = \int d x d y \;_f\langle
0|U(|x\rangle_X\langle x|\otimes|y\rangle_Y\langle
y|)U^\dagger|0\rangle_f,
\end{equation}
where the integral is defined for $x,y\in \mathbb{R}$ with the
restriction $x^2+y^2\ge n_0$. Here $|0\rangle_f\in {\cal H}_2$, and
$|x\rangle_{X,Y}$ denote generalized eigenstates of $X$ and $Y$,
respectively. Furthermore, $U=e^{\frac{\pi}{4}(a_1\otimes
  a_2^\dagger-a_1^\dagger\otimes a_2)}$ is the so--called beam
splitter operator~\cite{BeamSplit}, where
$a_{1,2}:=(X_{1,2}+i Y_{1,2})/\sqrt{2}$ are the annihilation operators
acting on the first and second system, respectively.  This expression
for $W_1$ can be derived by showing that $|f_{x,y}\rangle:=_f\langle
0|U|x\rangle_X\otimes|y\rangle_Y=\pi^{-1/2}|\alpha\rangle$, with
$\alpha=x+iy$. This, in turn, can be proved by realizing that it is an
eigenstate of the annihilation operator,
\begin{eqnarray}
&&a_1 |f_{x,y}\rangle = _f\langle
0|(a_1+a_2^\dagger) U|x\rangle_X\otimes|y\rangle_Y\nonumber\\
&=&_f\langle
0|U(X_1+i Y_2)|x\rangle_X\otimes|y\rangle_Y
=(x+i y)|f_{x,y}\rangle,\nonumber
\end{eqnarray}
where we have used the fact that $_f\langle 0|a_2^\dagger=0$ and that
$U^\dagger (a_1+a_2^\dagger)U=X_1+i Y_2$. The normalization factor can
be obtained by noting that the integral over the complex plane of
$|\alpha\rangle\langle\alpha|=\pi \id$. By looking at the integration
domain in (\ref{W1}) it is clear that $W_1\le A+B$, where
\begin{eqnarray}
A&=&\int d x d x' \;
_f\langle 0|U(|x\rangle_X\langle x|\otimes
|x'\rangle_X\langle x'|)U^\dagger|0\rangle_f,\nonumber\\
B&=&\int d x d x' \;
_f\langle 0|U(|x'\rangle_Y\langle x'|\otimes
|x\rangle_Y\langle x|)U^\dagger|0\rangle_f,\nonumber
\end{eqnarray}
where the integral is restricted to $|x|^2\ge n_0/2$ and
$-\infty<x<\infty$, and we have used that the integral of
$|x'\rangle_X\langle x'|$ is equal to that of $|x'\rangle_Y\langle
x'|$. Using
$U|x\rangle_X\otimes|x'\rangle_X=|(x+x')/\sqrt{2}\rangle_X\otimes
|(x-x')/\sqrt{2}\rangle_X$, $|_f\langle
0|x\rangle_X|^2=e^{-x^2}/\sqrt{\pi}$, and changing variables in
the integrals ([$z'=(x+x')/\sqrt{2}$, $z=\sqrt{2}x$] we obtain
\[
A=\frac{1}{\sqrt{\pi}}\int_{|z|^2\ge n_0}dz\; e^{-(z-X)^2}=:F(X).
\]
Analogously, $B=F(Y)$. It is straightforward to show that for all
$a>0$, $F(X)\le P^{X^2\ge a^2}+F(a)$, and similarly for $F(Y)$.
Noting that $F(a)\le
(1/\sqrt{\pi})e^{-(\sqrt{n_0}-a)^2}/(\sqrt{n_0}-a)$, for $a \in [0,
\sqrt{n_0}]$, and choosing $a=\sqrt{n_0/2}$ we conclude the proof.
\end{proof}

Lemma~\ref{lem:lowdimbound} below is a corollary of
Lemma~\ref{lem:probbound} and Lemma~\ref{lem:Nconstraint}. It allows
to restrict the support of a $(2k+n)$-partite permutation invariant
density operator, provided that measurements of the two canonical
operators $X$ and $Y$ on $k$ subsystems only result in small values.

\begin{lemma} \label{lem:lowdimbound} Let $X$ and $Y$ be two canonical
  operators on $\cH$, let $n \geq 2 k$, let $\cHb$ be the support of
  $P^{X^2 + Y^2 \leq n_0}$, for any $n_0 \geq 12 \ln
  \frac{7(k+n)}{k}$, and let $\rho^{2 k + n}$ be a permutation
  invariant density operator on $\cH^{\otimes 2 k + n}$. Let $(Z_1,
  \ldots, Z_k)$ be the outcomes of measurements of $k$ subsystems of
  $\rho^{2 k +n}$ with respect to $X$ and $Y$ (each chosen with
  probability $\frac{1}{2}$) and let $\cF$ be the event that the
  projection $P^{k+n}_{\cHb^{\otimes n}}$ applied to the remaining
  ${k+n}$ subsystems fails. Then
  \begin{align*}
    \Pr\bigl[(\max_{i=1}^k Z_i^2 < \frac{n_0}{2}) \wedge \cF\bigr]
    \leq 8 k^{\frac{3}{2}} e^{-\frac{k^3}{49 (k+n)^2}} \ .
  \end{align*} 
\end{lemma}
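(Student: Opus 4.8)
The plan is to derive the lemma directly from Lemma~\ref{lem:probbound} and Lemma~\ref{lem:Nconstraint}, after matching the right pair of binary POVMs to a relabelled partition of the $2k+n$ subsystems; the only non-routine ingredient will be a short numerical check of the hypothesis on $n_0$.

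First I would fix the two single-system binary POVMs. Take $\cU = \{U_0, U_1\}$ with $U_1 = \tfrac12 P^{X^2\ge n_0/2} + \tfrac12 P^{Y^2\ge n_0/2}$: this is exactly the measurement that picks $X$ or $Y$ with probability $\tfrac12$ each, measures it, and records a~$1$ iff the squared outcome is at least $\tfrac{n_0}{2}$, so the event $\max_i Z_i^2 < \tfrac{n_0}{2}$ coincides with ``all $k$ measured systems give outcome $0$''. Take $\cV = \{P_0,P_1\} = \{P_{\cHb}, P_{\cHb^\perp}\}$; since $n_0$ is a positive integer and $X^2+Y^2 = 2a^\dagger a + 1$ has spectrum $\{1,3,5,\dots\}$, one has $P_1 = P^{X^2+Y^2>n_0} = P^{X^2+Y^2\ge n_0+1}$, so $(U_1,V_1)$ are precisely the operators to which Lemma~\ref{lem:Nconstraint} applies.

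Next I would rewrite the event $\cF$. By~\eqref{eq:Pkndef}, $P^{k+n}_{\cHb^{\otimes n}} = \sum_{\mathbf{b}\in\{0,1\}^{k+n},\,\sum_i b_i\le k} P_{b_1}\otimes\cdots\otimes P_{b_{k+n}}$, so $\id - P^{k+n}_{\cHb^{\otimes n}}$ is a coarse-graining of $\cV^{\otimes(k+n)}$ corresponding to ``more than $k$ outcomes $1$''. Using permutation invariance of $\rho^{2k+n}$ to take the $\cU$-measured subsystems to be the first $k$, and letting $(X_1,\dots,X_{2k+n})$ be the outcome of $\cU^{\otimes k}\otimes\cV^{\otimes(k+n)}$ applied to $\rho^{2k+n}$, I get
\[
  \Pr\bigl[(\max_i Z_i^2 < \tfrac{n_0}{2})\wedge\cF\bigr] = \Pr\bigl[f_{X_1\cdots X_k}=0 \ \wedge\ f_{X_{k+1}\cdots X_{2k+n}} > \tfrac{k}{k+n}\bigr].
\]
Then I would invoke Lemma~\ref{lem:probbound} with $k\mapsto k$ and $n\mapsto k+n$ (legitimate since $n\ge 2k$ gives $k+n\ge 2k$) and with $\delta := \tfrac{k}{7(k+n)}$, which yields $\Pr\bigl[f_{X_{k+1}\cdots X_{2k+n}} > \gamma_{U_1\to V_1}(f_{X_1\cdots X_k}+\delta)+\delta\bigr] \le 8k^{3/2}e^{-k\delta^2} = 8k^{3/2}e^{-k^3/(49(k+n)^2)}$. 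Comparing with the display, it will suffice to prove $\gamma_{U_1\to V_1}(\delta)+\delta\le\tfrac{k}{k+n}$, because then the event on the left is contained in the bad event just bounded.

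That last inequality is where the real work lies, and it is the main obstacle. By Lemma~\ref{lem:Nconstraint}, $\gamma_{U_1\to V_1}(\delta)+\delta \le 5\delta + \tfrac{4}{c_0\sqrt{\pi n_0}}e^{-n_0 c_0^2} = \tfrac{5k}{7(k+n)} + \tfrac{4}{c_0\sqrt{\pi n_0}}e^{-n_0 c_0^2}$, so I must bound the last term by $\tfrac{2k}{7(k+n)}$. Using $n_0\ge 12\ln\tfrac{7(k+n)}{k}$ gives $e^{-n_0 c_0^2}\le\bigl(\tfrac{k}{7(k+n)}\bigr)^{12 c_0^2}$, and since $12 c_0^2 = 18-12\sqrt{2} > 1$ while $\tfrac{k}{7(k+n)}<1$, this is $\le\tfrac{k}{7(k+n)}$; moreover $n\ge 2k$ forces $n_0\ge 12\ln 21 > 36$, hence $\tfrac{4}{c_0\sqrt{\pi n_0}}\le 2$, giving $\tfrac{4}{c_0\sqrt{\pi n_0}}e^{-n_0 c_0^2}\le\tfrac{2k}{7(k+n)}$ and therefore $\gamma_{U_1\to V_1}(\delta)+\delta\le\tfrac{k}{k+n}$. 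The delicate point is exactly this calibration: the constant $12$ in the hypothesis on $n_0$ is tuned so that $12 c_0^2$ just barely exceeds $1$ (it is $\approx 1.03$), and the assumption $n\ge 2k$ is precisely what pushes $n_0$ past the threshold $\tfrac{4}{\pi c_0^2}\approx 15$ at which $\tfrac{4}{c_0\sqrt{\pi n_0}}\le 2$. Everything else is the mechanical matching of the partition $2k+n = k+(k+n)$ to the hypotheses of Lemma~\ref{lem:probbound} and the rewriting of the ``projection fails'' event via~\eqref{eq:Pkndef}.
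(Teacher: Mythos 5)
Your proposal is correct and follows essentially the same route as the paper's proof: identify $\cU=\{\id-U_1,U_1\}$ and $\cV=\{\id-V_1,V_1\}$ with the operators of Lemma~\ref{lem:Nconstraint}, rewrite the event via the decomposition~\eqref{eq:Pkndef} as a frequency condition, set $\delta=\frac{k}{7(k+n)}$, check $\gamma_{U_1\to V_1}(\delta)+\delta\le\frac{k}{k+n}$ from the hypothesis on $n_0$, and invoke Lemma~\ref{lem:probbound} with its second block of size $k+n$. Your numerical calibration (working directly with $c_0$, noting $12c_0^2>1$ and $n_0\ge 12\ln 21$) and your explicit bookkeeping of the substitution $n\mapsto k+n$ are, if anything, slightly more careful than the paper's shortcut $\gamma_{U_1\to V_1}(\delta)+\delta\le 5\delta+\frac{8}{\sqrt{n_0}}e^{-n_0/12}$.
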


\begin{proof}
  Let $U_1$ and $V_1$ be defined as in
  Lemma~\ref{lem:Nconstraint}. Furthermore, let $X_1, \ldots, X_{k+n}$
  be the outcomes of the POVM $\cU^{\otimes k} \otimes \cV^{\otimes
    n}$ defined by $\cU = \{\id-U_1, U_1\}$ and $\cV = \{\id-V_1,
  V_1\}$, as in Lemma~\ref{lem:probbound}. The probability we want to
  bound can then be rewritten as
  \begin{align*}
    \Pr\bigl[(\max_{i=1}^k Z_i^2 < \frac{n_0}{2}) \wedge \cF\bigr] 
  =
    \Pr\bigl[(f_{\mathbf{X}^k} = 0) \wedge (f_{\mathbf{X}^n} > {\textstyle \frac{k}{k+n}}) \bigr] \ .
  \end{align*}
  where $f_{\mathbf{X}^k}$ and $f_{\mathbf{X}^n}$ are the frequencies
  of $1$s in the tuples $\mathbf{X}^k = (X_1, \ldots, X_k)$ and
  $\mathbf{X}^n = (X_{k+1}, \ldots, X_{k+n})$, respectively.  With
  $\delta := \frac{k}{7(k + n)}$, we have
  \begin{align*}
  \gamma_{U_1 \to V_1}(\delta) + \delta \leq 5 \delta + {\textstyle \frac{8}{\sqrt{n_0}}} e^{-\frac{1}{12} n_0} \leq
  {\textstyle \frac{k}{k + n}} \ ,
  \end{align*}
  and, hence, the probability above can be bounded by
  \begin{multline*}
    \Pr\bigl[(f_{\mathbf{X}^k} = 0) \wedge (f_{\mathbf{X}^n} > {\textstyle \frac{k}{k+n}}) \bigr] \\
  \leq
    \Pr\bigl[(f_{\mathbf{X}^k} = 0) \wedge (f_{\mathbf{X}^n} > \gamma_{U_1 \to V_1}(f_{\mathbf{X}^k} + \delta) + \delta) \bigr] \\
  \leq
    \Pr\bigl[f_{\mathbf{X}^n} > \gamma_{U_1 \to V_1}(f_{\mathbf{X}^k} + \delta) + \delta\bigr] \ .
  \end{multline*}
  The claim then follows from Lemma~\ref{lem:probbound}.
\end{proof}

\begin{remark} \label{rem:lowdimbound} It is straightforward to
  generalize Lemma~\ref{lem:lowdimbound} to other measurements,
  specified by an arbitrary POVM $\cM = \{M_z\}_{z \in \cZ}$. The
  condition $\max_i Z_i^2 < \frac{n_0}{2}$ may then be replaced by the
  requirement that the outcomes $Z_i$ are contained in a certain set
  $\cZb \subseteq \cZ$ such that for any $\delta > 0$
    \begin{align*}
    \gamma_{U_1 \to P_{\cHb}^\perp}(\delta) \leq O(\delta) \ ,
  \end{align*}
  where $U_1 := \sum_{z \notin \cZb} M_z$ and where $P_{\cHb}^\perp$
  denotes the projection onto the subspace orthogonal to a
  finite-dimensional subspace $\cHb$, which may be chosen depending on
  $\delta$. For the considerations below
  (Section~\ref{sec:combination}), however, the dimension $d$ of
  $\cHb$ needs to be bounded by $d \leq O(\delta^{-\frac{3}{2}})$, so
  that $d \leq O((\frac{n}{k})^{\frac{3}{2}})$.
\end{remark}

\subsection{Purification in restricted symmetric subspaces}

The de Finetti type statements formulated in Section~\ref{sec:dF}
below apply to states on the symmetric subspace. The following lemma,
which is a generalization of Lemma~4.2.2 of~\cite{Renner05} (see
also~\cite{CKMR06}), allows to extend these statements to general
permutation invariant density operators.

\begin{lemma} \label{lem:pur} Let $\cHb$ be a subspace of $\cH$ and
  let $\rho^{2 k + n} \in \states(\cH^{\otimes 2k+n})$ be permutation
  invariant with support contained in the support of
  $P^{2k+n}_{\cHb^{\otimes k+n}}$. Then there exists a purification of
  $\rho_{2k+n}$ on $\Sym^{2k+n}({\cH \otimes \cH}, ({\cHb \otimes
    \cHb})^{\otimes n})$.
\end{lemma}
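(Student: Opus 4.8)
The plan is to build the purification explicitly, starting from the standard purification on the symmetric subspace and then cutting it down to the restricted symmetric subspace. First I would recall that any permutation invariant $\rho^{2k+n}$ on $\cH^{\otimes 2k+n}$ admits a purification $\ket{\Psi}$ on $(\cH \otimes \cH)^{\otimes 2k+n}$ lying in $\Sym^{2k+n}(\cH \otimes \cH)$; this is precisely the content of Lemma~4.2.2 of~\cite{Renner05} (see also~\cite{CKMR06}), obtained by writing $\rho^{2k+n} = \sum_i p_i \proj{\varphi_i}$ in terms of symmetric eigenvectors $\ket{\varphi_i}$ and forming $\ket{\Psi} = \sum_i \sqrt{p_i} \ket{\varphi_i} \otimes \ket{\varphi_i}$, where the second copy of each $\ket{\varphi_i}$ lives in an auxiliary $\cH^{\otimes 2k+n}$; one checks that the resulting vector is invariant under the diagonal action of $S_{2k+n}$ on the two factors, hence lies in $\Sym^{2k+n}(\cH \otimes \cH)$ after the natural identification $(\cH \otimes \cH)^{\otimes 2k+n} \cong \cH^{\otimes 2k+n} \otimes \cH^{\otimes 2k+n}$.

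Next I would exploit the support condition. Since $\rho^{2k+n}$ is supported on the range of $P^{2k+n}_{\cHb^{\otimes k+n}}$, each eigenvector $\ket{\varphi_i}$ in the decomposition above lies in that range as well, so $\ket{\Psi}$ lies in the range of $P^{2k+n}_{\cHb^{\otimes k+n}} \otimes \id$ acting on $\cH^{\otimes 2k+n} \otimes \cH^{\otimes 2k+n}$. By permutation symmetry of $\ket{\Psi}$ on the combined system $\cH \otimes \cH$, and because $P^{2k+n}_{\cHb^{\otimes k+n}}$ commutes with all $\pi \in S_{2k+n}$, one can symmetrize: $\ket{\Psi}$ in fact lies in the range of both $P^{2k+n}_{\cHb^{\otimes k+n}} \otimes \id$ and $\id \otimes P^{2k+n}_{\cHb^{\otimes k+n}}$, hence in the range of their product (these two operators commute because each is a sum of tensor products of $P_0,P_1$ over bitstrings with few $1$s, on the respective factor). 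The key combinatorial point is then that on each of the $2k+n$ slots, both the ``first-copy'' bitstring and the ``second-copy'' bitstring have relative frequency of $1$s at most $\frac{k}{k+n}$; I would argue that the number of slots where \emph{neither} copy is in $\cHb$ — i.e. where the combined state on $\cH \otimes \cH$ fails to lie in $\cHb \otimes \cHb$ — is at most the sum of the two counts of $1$s, hence at most $2k \cdot \frac{k}{k+n} \le$ (for $n \ge 2k$) a fraction $\frac{k}{k+n}$ of the total $2k+n$ slots, so that $\ket{\Psi}$ is supported on the range of $P^{2k+n}_{(\cHb \otimes \cHb)^{\otimes n}}$ in the doubled system; combined with symmetry this places $\ket{\Psi}$ in $\Sym^{2k+n}(\cH \otimes \cH, (\cHb \otimes \cHb)^{\otimes n})$.

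The main obstacle I anticipate is precisely this counting step: reconciling the ``$k$ out of $k+n$'' bookkeeping in the definition~\eqref{eq:Pkndef} of $P^{k+n}_{\cHb^{\otimes n}}$ (note the asymmetry between the two sizes of index sets, and that the relative frequency~\eqref{eq:relfreq} is normalized by $k$, not $k+n$) with the doubling of the system. One must be careful that the hypothesis is $P^{2k+n}_{\cHb^{\otimes k+n}}$ (i.e. the ``$2k$-out-of-$2k+n$'' version) while the conclusion wants $(\cHb\otimes\cHb)^{\otimes n}$ with the same total $2k+n$, and check that the allowed fraction of ``bad'' slots indeed stays within the threshold; the condition $n \ge 2k$ is what makes the arithmetic close. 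Once the support statement is established, that $\ket{\Psi}$ is a purification of $\rho^{2k+n}$ is immediate from the construction (tracing out the second copy returns $\rho^{2k+n}$), so no further work is needed there.
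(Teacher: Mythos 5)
Your overall route is the same as the paper's: purify $\rho^{2k+n}$ symmetrically into a second copy of $\cH^{\otimes 2k+n}$, observe that both halves inherit the support restriction, and count the slots where the pair leaves $\cHb \otimes \cHb$. Two steps, however, have genuine gaps. First, the purification itself. A permutation invariant density operator need not admit an eigenbasis of \emph{symmetric} vectors (the maximally mixed state on two qubits has the antisymmetric singlet as an eigenvector), so the vector $\sum_i \sqrt{p_i}\,\varphi_i \otimes \varphi_i$ is not available as you describe, and without a complex conjugation it is in general not invariant under the diagonal action $\pi \otimes \pi$. The paper instead uses the canonical purification~\eqref{eq:Psidef}, built from a product basis $\{e_j\}$ \emph{adapted to $\cHb$}; invariance holds because permutations act as real matrices in that basis. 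Moreover, your claim that the support constraint transfers to the second copy ``by permutation symmetry'' is not valid: $\pi \otimes \pi$ never exchanges the two copies, so symmetry of $\Psi$ together with the fact that $P^{2k+n}_{\cHb^{\otimes k+n}}$ commutes with permutations gives no information about the second factor. In the paper's construction the second halves are literally the same basis vectors $e_{\mathbf{j}}$, with at most $k$ indices outside the basis of $\cHb$, so both factors lie in the range of $P^{2k+n}_{\cHb^{\otimes k+n}}$ automatically; this is exactly the property that is lost if you take an arbitrary symmetric purification from Lemma~4.2.2 of~\cite{Renner05} as a black box.

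Second, the counting. A slot fails to lie in $\cHb \otimes \cHb$ as soon as \emph{either} copy (not ``neither'') leaves $\cHb$, so the correct bound is the union bound: at most $k$ bad slots in each copy gives at most $2k$ bad slots for the pair, out of $2k+n$ in total, and $2k$ is precisely the allowance in the definition of $P^{2k+n}_{(\cHb \otimes \cHb)^{\otimes n}}$, which projects onto the span of the spaces $\pi\bigl((\cH \otimes \cH)^{\otimes 2k} \otimes (\cHb \otimes \cHb)^{\otimes n}\bigr)$. Your figures $2k \cdot \frac{k}{k+n}$ and the comparison with ``a fraction $\frac{k}{k+n}$ of the $2k+n$ slots'' do not correspond to the thresholds in~\eqref{eq:Pkndef}, and the condition $n \geq 2k$ you invoke is not a hypothesis of the lemma and is not needed: the arithmetic closes exactly for all $k$ and $n$.
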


\begin{proof}
  Let $\{e_j\}_{j \in J}$ be an orthonormal basis of $\cH$ such that
  $\{e_j\}_{j \in K}$, for some $K \subset J$, is a basis of $\cHb$.
  We can then define a vector $\Phi \in (\cH \otimes \cH)^{\otimes
    2k+n}$ by
  \begin{align} \label{eq:Psidef} \Phi = \sum_{\mathbf{j} \in
      J^{2k+n}} (\rho^{2k+n} \otimes \id_{\cH}^{\otimes
      2k+n}) e_{\mathbf{j}} \otimes e_{\mathbf{j}}
  \end{align}
  where, for any $\mathbf{j} = (j_1, \ldots, j_{2k+n}) \in J^{2k+n}$,
  \begin{align*}
    e_{\mathbf{j}} = e_{j_1} \otimes \cdots \otimes e_{j_{2k+n}} \ .
  \end{align*}
  The state defined by $\Phi$ is obviously a purification of
  $\rho^{2k+n}$. Furthermore, because $\rho^{2k+n}$ is permutation
  invariant, we have for any $\pi \in S_{2 k + n}$
  \begin{align*}
    (\pi \otimes \pi) \Phi 
  & =
     (\pi \otimes \pi) \sum_{\mathbf{j} \in J^{2k+n}} (\rho^{2k+n} \otimes \id_{\cH}^{\otimes 2k+n}) e_{\mathbf{j}} \otimes e_{\mathbf{j}} \\
  & =
     \sum_{\mathbf{j} \in J^{2k+n}} (\rho^{2k+n} \otimes \id_{\cH}^{\otimes 2k+n}) (\pi e_{\mathbf{j}} \otimes \pi e_{\mathbf{j}}) = \Phi 
  \end{align*}
  and, hence, $\Phi \in \Sym^{2k+n}(\cH \otimes \cH)$. It thus remains
  to verify that $\Phi$ is an element of the support of
  $P^{2k+n}_{(\cHb \otimes \cHb)^{\otimes n}}$.

  Since $\rho^{2k + n}$ is contained in the support of
  $P^{2k+n}_{\cHb^{\otimes k+n}}$, the sum in~\eqref{eq:Psidef} can be
  restricted to terms such that $e_{\mathbf{j}}$ lies in the support
  of $P^{2k+n}_{\cHb^{\otimes k+n}}$, too (or, equivalently, the tuple
  $\mathbf{j}$ has at most $k$ entries outside $J$). This implies that
  $\Phi$ lies in the support of $P^{2 k + n}_{\cHb^{\otimes k+n}}
  \otimes P^{2 k + n}_{\cHb^{\otimes k+n}}$. The assertion then
  follows because this support is contained in the support of $P^{2 k
    + n}_{(\cHb \otimes \cHb)^{\otimes n}}$.
\end{proof}

\subsection{An extended de Finetti-type theorem} \label{sec:dF}

The purpose of this section is to derive a de Finetti-type theorem for
states on the symmetric subspace of product spaces with possibly
\emph{infinite-dimensional} subsystems (Theorem~\ref{thm:dFext}). We
start, however, with a de Finetti-type statement for \emph{finite}
dimensions (Lemma~\ref{lem:dF}). It can be seen as a strengthened
version of the exponential de Finetti theorem proposed
in~\cite{Renner05}. The claim is that any $(2k+n)$-partite symmetric
vector $\Phi$ is approximated by a \emph{superposition} of vectors
that are $\binom{k+n}{n}$-i.i.d.\ on $k+n$ subsystems. We note that,
in contrast, the approximation in~\cite{Renner05} has the form of a
\emph{convex combination} of $\binom{k+n}{n}$-i.i.d.\ states. A second
difference between Lemma~\ref{lem:dF} and the result
of~\cite{Renner05} is that we use the overlap (i.e., the scalar
product between vectors) instead of the trace distance to quantify the
quality of the approximation (this slightly simplifies the argument
below).

\begin{lemma} \label{lem:dF} Let $\cH$ be a $d$-dimensional Hilbert
  space and let $k, n \in \mathbb{N}$. There exists an isometry $U$
  from $\cH^{\otimes k}$ to a Hilbert space $\cH'$ with orthonormal
  basis $\{f_\nu\}_{\nu \in \cV}$, where $\cV$ is a finite set of unit
  vectors $\nu \in \cH$, such that the following holds.  For any unit
  vector $\Phi \in \Sym^{2k+n}(\cH)$ there exists a unit vector
  $\hat{\Phi} \in \cH' \otimes \Sym^{k+n}(\cH)$ of the form
  \begin{align} \label{eq:Phidef}
    \hat{\Phi} = {\textstyle \sqrt{\frac{1}{|\cV|}}} \sum_{\nu \in \cV} f_\nu \otimes \hat{\Phi}_\nu 
  \end{align}
  with $\hat{\Phi}_\nu \in \Sym^{k+n}(\cH, \nu^{\otimes n})$ such that 
  \begin{align} \label{eq:fidbound}
    \spr{\hat{\Phi}}{(U \otimes \id^{\otimes k+n})\Phi}  > 1- k^d e^{- \frac{k (k+1)}{2 k + n}} \ .
  \end{align}
\end{lemma}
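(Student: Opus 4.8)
The plan is to read $U$ as a measurement performed on the first $k$ of the $2k+n$ subsystems and to show that, conditioned on its outcome $\nu$, the remaining $k+n$ subsystems are on average almost i.i.d.\ along $\nu$. An isometry $U\colon\cH^{\otimes k}\to\cH'$ with $\bra{f_\nu}U=\bra{\mu_\nu}$ is the same datum as a rank-one POVM $\{\proj{\mu_\nu}\}_{\nu\in\cV}$ on $\cH^{\otimes k}$, so I would take $\{\mu_\nu\}$ to be a finite discretisation of the coherent-state resolution $\id_{\Sym^k(\cH)}=\binom{d+k-1}{k}\int\proj{\nu}^{\otimes k}\,d\nu$ (a sufficiently good projective design $\cV$ of unit vectors, completed to a POVM on all of $\cH^{\otimes k}$), so that the label $\nu$ is an estimate of the single-subsystem state. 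For a unit vector $\Phi\in\Sym^{2k+n}(\cH)$ one then gets $(U\otimes\id^{\otimes k+n})\Phi=\sum_{\nu\in\cV}f_\nu\otimes\Psi_\nu$ with $\Psi_\nu:=(\bra{\mu_\nu}\otimes\id^{\otimes k+n})\Phi$, and permutation invariance of $\Phi$ forces $\Psi_\nu\in\Sym^{k+n}(\cH)$.

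Next I would reduce the claim to a single scalar estimate. Among all unit vectors $\hat\Phi$ of the form~\eqref{eq:Phidef}, the overlap $\spr{\hat\Phi}{(U\otimes\id^{\otimes k+n})\Phi}$ is maximised by taking each $\hat\Phi_\nu$ proportional to $P_{\Sym^{k+n}(\cH,\nu^{\otimes n})}\Psi_\nu$ --- optimise first the direction of $\hat\Phi_\nu$ by Cauchy--Schwarz, then its length under $\tfrac1{|\cV|}\sum_\nu\|\hat\Phi_\nu\|^2=1$ --- and for that choice the overlap equals $\sqrt p$ with $p:=\sum_{\nu\in\cV}\|P_{\Sym^{k+n}(\cH,\nu^{\otimes n})}\Psi_\nu\|^2$. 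Since $\Psi_\nu$ is symmetric, $P_{\Sym^{k+n}(\cH,\nu^{\otimes n})}\Psi_\nu=P^{k+n}_{\nu^{\otimes n}}\Psi_\nu$ (the projector of~\eqref{eq:Pkndef} with $\cHb=\mathrm{span}\{\nu\}$), and $\sum_\nu\proj{\mu_\nu}=\id$ then yields $1-p=\sum_{\nu\in\cV}\spr{\Phi}{(\proj{\mu_\nu}\otimes Q_\nu)\Phi}$, where $Q_\nu:=\id^{\otimes k+n}-P^{k+n}_{\nu^{\otimes n}}$ projects onto the span of product vectors having strictly more than $k$ of their last $k+n$ factors orthogonal to $\nu$. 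As $\sqrt p>1-\eps$ is equivalent to $1-p<\eps(2-\eps)$ --- the assertion being vacuous once $\eps:=k^d e^{-k(k+1)/(2k+n)}\ge1$ --- it remains to bound $1-p$ above by $\eps(2-\eps)$ uniformly in $\Phi$.

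This last estimate is the main obstacle. If the POVM is chosen $U(d)$-covariantly, the operator $\sum_\nu\proj{\mu_\nu}\otimes Q_\nu$ is (up to the discretisation) $U(d)$-invariant and hence a scalar on the irreducible space $\Sym^{2k+n}(\cH)$; so $1-p$ is independent of $\Phi$ and may be evaluated on the test vector $\nu_0^{\otimes(2k+n)}$, which collapses it to a Beta integral and a sum $\sum_{j}\binom{k+n+j}{k}\binom{k-j+d-2}{d-2}$ of binomial coefficients normalised (by Vandermonde) to $\binom{2k+n+d-1}{k+n}$. Alternatively one bounds the $\nu$-sum branchwise: in branch $\nu$, $\spr{\Phi}{(\proj{\mu_\nu}\otimes Q_\nu)\Phi}$ is, up to the weight $\|\Psi_\nu\|^2\le1$, the probability that --- having found the first $k$ subsystems to agree with $\nu^{\otimes k}$ --- more than $k$ of the other $k+n$ fail the binary test $\{\proj\nu,\id-\proj\nu\}$; by permutation invariance of $\Phi$ this is an exchangeable-sampling deviation, controlled by a Chernoff estimate in the spirit of Lemma~\ref{lem:probbound}, and summing over the $\lesssim\dim\Sym^k(\cH)\le k^d$ relevant branches produces the prefactor $k^d$ and the exponent $\tfrac{k(k+1)}{2k+n}$.

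The genuinely delicate points, which I expect to be where essentially all the quantitative work lies, are: calibrating the discretisation finely enough that the conditioning on the outcome $\nu$ is really informative about the single-subsystem state --- the naive rank-one coherent-state POVM can be too sharp for this in some parameter regimes, and the bound then fails --- while keeping $|\cV|$ and the accumulated discretisation/union-bound loss inside the polynomial prefactor, and then pushing the resulting binomial (resp.\ tail) estimate through with the stated constants. The remainder of the argument is the bookkeeping above, together with the averaging identities for the symmetric subspace (Schur's lemma, and $\langle\Phi|\,P_{\Sym^m(\cH)}\!\otimes\!\id\,|\Phi\rangle=1$ for symmetric $\Phi$), which are elementary.
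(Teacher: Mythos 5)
Your framework is essentially the paper's: the paper also builds $U$ from the (discretized) covariant resolution $\frac{1}{|\cV|}\sum_{\nu}(\proj{\nu})^{\otimes k}\approx\frac{1}{\dim\Sym^k(\cH)}P_{\Sym^k(\cH)}$ obtained from Schur's lemma, writes $(U\otimes\id^{\otimes k+n})\Phi$ as $\sqrt{1/|\cV|}\sum_\nu f_\nu\otimes\Phi_\nu$ with branch vectors $\Phi_\nu\propto(\bra{\nu}^{\otimes k}\otimes\id)\Phi$, takes $\hat{\Phi}_\nu=P_{\Sym^{k+n}(\cH,\nu^{\otimes n})}\Phi_\nu$ and normalizes, so the whole lemma reduces, exactly as in your second paragraph, to bounding $\frac{1}{|\cV|}\sum_\nu\bra{\Phi_\nu}P^{\perp}_{\Sym^{k+n}(\cH,\nu^{\otimes n})}\ket{\Phi_\nu}$. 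The gap is that you leave open precisely the step that carries all the content, and the tool you propose for it in your ``branchwise'' route would fail quantitatively. The paper's key input (Eq.~(4.12) of Renner's thesis) is the \emph{exact} per-branch bound, valid uniformly for every unit $\Psi\in\Sym^{2k+n}(\cH)$ and every $\nu$:
\begin{align*}
  \bra{\Psi}\,(\proj{\nu})^{\otimes k}\otimes P^{\perp}_{\Sym^{k+n}(\cH,\nu^{\otimes n})}\,\ket{\Psi}
  \;\leq\;\frac{\binom{k+n}{k+1}}{\binom{2k+n}{k+1}}
  \;<\;e^{-\frac{k(k+1)}{2k+n}}\ .
\end{align*}
Since the local projectors $\proj{\nu}$, $\id-\proj{\nu}$ commute, this is indeed an exchangeability statement, but it is the sharp hypergeometric (sampling-without-replacement) count --- for a fixed type with $r\geq k+1$ ``failures'' the probability that the first $k$ positions all pass is $\binom{k+n}{r}/\binom{2k+n}{r}\leq\binom{k+n}{k+1}/\binom{2k+n}{k+1}$ --- and \emph{not} a Chernoff-type deviation bound. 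A bound ``in the spirit of Lemma~\ref{lem:probbound}'' gives $e^{-k\delta^2}$ with $\delta\sim k/(k+n)$, i.e.\ exponent $\sim k^3/(k+n)^2$, which is far weaker than the claimed $k(k+1)/(2k+n)$ (in the regime $k=m^3$, $n\sim m^4$ used later, $m$ versus $m^2$), so that route cannot produce \eqref{eq:fidbound}.

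Two smaller corrections to your bookkeeping. First, the prefactor $k^d$ does not come from a union bound over ``$\lesssim k^d$ relevant branches'': the number of branches is $|\cV|$, which must be huge to approximate the Haar integral. It comes from the weight of the covariant resolution, i.e.\ $\|\Phi_\nu\|^2$ carries the factor $\dim\Sym^k(\cH)\leq k^d$, and averaging the uniform per-branch bound above over $\nu$ with weight $\dim\Sym^k(\cH)/|\cV|$ gives $\dim\Sym^k(\cH)\,e^{-k(k+1)/(2k+n)}\leq k^d e^{-k(k+1)/(2k+n)}$. Second, your worry that the rank-one coherent-state POVM might be ``too sharp'' is unfounded --- the displayed bound holds uniformly in $\nu$, so no calibration of the measurement is needed; the only discretization issue is making $\frac{1}{|\cV|}\sum_\nu(\proj{\nu})^{\otimes k}$ close enough to the normalized projector that $\tilde{U}$ is close to an isometry, which costs an arbitrarily small $\mu$. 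Your alternative Schur-covariance route (evaluating the invariant operator on $\nu_0^{\otimes(2k+n)}$ and computing a Beta-weighted binomial tail) can be made to work and is a legitimate variant, but as written it is a plan rather than a proof: the integral still has to be carried out and shown to reproduce the ratio of binomial coefficients above, and the discretization error has to be tracked separately.
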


Note that~\eqref{eq:fidbound} can be rewritten in terms of the
fidelity $F(\cdot, \cdot)$ as
\begin{align*}
  F(\hat{\Phi}, (U \otimes \id^{\otimes k+n})\Phi)  > 1- k^d e^{- \frac{k (k+1)}{2 k + n}} \ .
\end{align*}
The de Finetti theorem of~\cite{Renner05} (Theorem~4.3.2) can then be
obtained by taking the partial trace over $\cH'$ in both arguments of
$F(\cdot, \cdot)$ and converting the fidelity into a trace distance.

\begin{proof}
  The unitary group acts irreducibly on the subspace
  $\Sym^k(\cH)$. Hence, by Schur's lemma,
  \begin{align*}
    \int U^{\otimes k} (\proj{\nu_0})^{\otimes k} (U^\dagger)^{\otimes k} \omega(U)
  =
    {\textstyle \frac{1}{\dim(\Sym^k(\cH))}} P_{\Sym^k(\cH)}
  \end{align*}
  where $\nu_0$ is an arbitrary unit vector in $\cH$ and where
  $\omega$ is the Haar measure on the set of unitaries on $\cH$. Note
  that the integral on the left hand side can be approximated to any
  accuracy by a sum over a finite set $\cV$ of unit vectors $\nu \in
  \cH$. That is, for any $\mu > 0$ there exists a finite set $\cV$
  such that
  \begin{align*}
    \Bigl\|
      \frac{1}{|\cV|} \sum_{\nu \in \cV} (\proj{\nu})^{\otimes k} 
    -
      {\textstyle \frac{1}{\dim(\Sym^k(\cH))}} P_{\Sym^k(\cH)} 
    \Bigr\|
  \leq 
    \mu \ .
  \end{align*}

  Let $\cH'$ a Hilbert space with orthonormal basis $\{f_\nu\}_{\nu
    \in \cV}$ and define the linear map $\tilde{U}$ from $\cH^{\otimes
    k}$ to $\cH'$ by
  \begin{align*}
    \tilde{U} := {\textstyle \frac{\sqrt{\dim(\Sym^k(\cH))}}{\sqrt{|\cV|}}} \sum_{\nu \in \cV} \ket{f_\nu}\bra{\nu^{\otimes k}} \ .
  \end{align*}
  We then have
  \begin{align*}
    \tilde{U}^{\dagger} \tilde{U}
  =
    {\textstyle \frac{{\dim(\Sym^k(\cH))}}{|\cV|}} \sum_{\nu \in \cV} \proj{\nu^{\otimes k}}
  \end{align*} 
  and, consequently, 
  \begin{align*}  
    \bigl\|
      \tilde{U}^{\dagger} \tilde{U}
    -
      P_{\Sym^k(\cH)} 
    \bigr\| 
  \leq 
    \mu \ .
  \end{align*}
  In particular, $\tilde{U}$ can be made arbitrarily close to the isometry
  \begin{align*}
    U := \tilde{U} (\tilde{U}^\dagger \tilde{U})^{-\frac{1}{2}}
  \end{align*}
  on $\Sym^k(\cH)$, i.e., for any $\mu' > 0$ there exists a finite set
  $\cV$ such that
  \begin{align*}
    \bigl\| \tilde{U} - U \| \leq \mu' \ .
  \end{align*}
  It thus remains to be shown that inequality~\eqref{eq:fidbound}
  holds for $\tilde{U}$ (because it then also holds for the isometry
  $U$, provided $\mu'$ is sufficiently small). 

  By the definition of $\tilde{U}$, the vector $(\tilde{U} \otimes
  \id^{\otimes k+n})\Phi$ can be written as
  \begin{align*}
    (\tilde{U} \otimes \id^{\otimes k+n})\Phi
  =
    {\textstyle \sqrt{\frac{1}{|\cV|}}} \sum_{\nu \in \cV} f_\nu \otimes \Phi_\nu
  \end{align*}
  where 
  \begin{align*}
    \Phi_\nu :={\textstyle \sqrt{\dim(\Sym^k(\cH))}} (\bra{\nu}^{\otimes k} \otimes \id_{k+n})\Phi \in \cH^{\otimes k+n} \ .
  \end{align*}
  We now define the vector $\hat{\Phi}$ by choosing each
  $\hat{\Phi}_\nu$ of the sum~\eqref{eq:Phidef} as the projection of
  $\Phi_\nu$ onto the subspace $\Sym^{k+n}(\cH, \nu^{\otimes n})$,
  \begin{align*}
    \hat{\Phi}_\nu := P_{\Sym^{k+n}(\cH, \nu^{\otimes n})} \Phi_\nu \ .
  \end{align*}
  Note that the length of the resulting vector $\hat{\Phi}$ is
  generally smaller than $1$. However, the statement for unit vectors
  can be obtained by normalizing $\hat{\Phi}$ (because the
  normalization can only increase the overlap).

  Condition~\eqref{eq:fidbound} (with $U$ replaced by $\tilde{U}$) can
  now be rewritten as
  \begin{align}
    {\textstyle \frac{1}{|\cV|}} \sum_{\nu \in \cV} \bra{\Phi_{\nu}}  P_{\Sym^{k+n}(\cH, \nu^{\otimes n})} \ket{\Phi_{\nu}}
  >
     1 - k^d e^{- \frac{k (k+1)}{2 k + n}} \ ,
  \end{align}
  or, equivalently, as 
  \begin{align} \label{eq:phibound}
    {\textstyle \frac{1}{|\cV|}} \sum_{\nu \in \cV} \bra{\Phi_{\nu}}  P^\perp_{\Sym^{k+n}(\cH, \nu^{\otimes n})} \ket{\Phi_{\nu}}
  <
     k^d e^{- \frac{k (k+1)}{2 k + n}}  \ ,
  \end{align}
  because $\frac{1}{|\cV|} \sum_{\nu \in \cV} \spr{\Phi_\nu}{\Phi_\nu}
  \geq 1-\mu$, for any $\mu > 0$.

  A straightforward calculation (cf.\ Eq.~(4.12) of~\cite{Renner05} or
  the supplementary material of~\cite{Renner07} for a similar but more
  detailed argument) shows that, for any vector $\Psi \in
  \Sym^{2k+n}(\cH)$,
  \begin{multline*}
    \bra{\Psi} (\proj{\nu})^{\otimes k} \otimes P^\perp_{\Sym^{k+n}(\cH, \nu^{\otimes n})} \ket{\Psi}
  \leq
    \frac{\binom{k+n}{k+1}}{\binom{2k+n}{k+1}} \\
  \leq
    \Bigl(\frac{k+n-1}{2k + n}\Bigr)^k
  <
    e^{-\frac{k(k+1)}{2k + n}}  \ .
  \end{multline*}
  Applying this bound to the individual terms in the
  sum~\eqref{eq:phibound} gives
  \begin{align*}
    \bra{\Phi_{\nu}}  P^\perp_{\Sym^{k+n}(\cH, \nu^{\otimes n})} \ket{\Phi_{\nu}}
  & <
     {\textstyle \dim(\Sym^k(\cH))} e^{- \frac{k (k+1)}{2 k + n}} \\
  & \leq
    k^d e^{- \frac{k (k+1)}{2 k + n}} \ .
  \end{align*}
  This implies~\eqref{eq:phibound} and thus concludes the proof.
\end{proof}

Based on Lemma~\ref{lem:dF}, we now derive a de Finetti-type theorem
that applies to states $\rho^{4k+n}$ on the symmetric subspace
$\Sym^{4k+n}(\cH, \cHb^{\otimes 3k+n})$, where $\cHb$ is a
finite-dimensional subspace of a possibly infinite-dimensional Hilbert
space $\cH$. The claim is that, when tracing out the first $2 k$
subsystems, the resulting state $\rho^{2k + n} =
\tr_{2k}(\rho^{4k+n})$ is close to a convex combination of
$\binom{2k+n}{n}$-i.i.d.\ states $\hat{\rho}_{\nu}^{2k+n}$. Here,
closeness is measured in terms of the fidelity $F(\cdot, \cdot)$.

\begin{theorem} \label{thm:dFext} Let $\cHb$ be a $d$-dimensional
  subspace of a Hilbert space $\cH$, let $n,k \in \mathbb{N}$, and let
  $\rho^{4k+n}$ be a density operator on $\Sym^{4k+n}(\cH,
  \cHb^{\otimes 3k + n})$.  Then there exists a probability
  distribution $p_\nu$ on a finite set $\cV$ of unit vectors $\nu \in
  \bar{\cH}$ and a family $\{\hat{\rho}^{2k+n}_\nu\}_{\nu \in \cV}$ of
  density operators on $\Sym^{2k+n}(\cH, \nu^{\otimes n})$ such that
  \begin{align} \label{eq:dFextclaim}
    F\bigl(\rho^{2k + n}, \sum_{\nu \in \cV} p_\nu \hat{\rho}^{2k+n}_\nu\bigr) >  1 - k^{d} e^{- \frac{k(k+1)}{4k + n}}\ .
  \end{align}  
\end{theorem}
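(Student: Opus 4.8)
\emph{Overview and reduction to a pure state.} The plan is to reduce to a pure input state and then apply the finite‑dimensional de Finetti estimate of Lemma~\ref{lem:dF} \emph{inside} $\cHb$, handling separately the few subsystems that fall outside $\cHb$. Write $\rho^{4k+n}=\sum_i\lambda_i\proj{\Phi_i}$ with unit vectors $\Phi_i\in\Sym^{4k+n}(\cH,\cHb^{\otimes 3k+n})$. It suffices to find, for each $i$ and for one fixed finite net $\cV$ of unit vectors in $\cHb$, a convex combination $\hat\rho^{2k+n}_i=\sum_{\nu\in\cV}p_{i,\nu}\hat\rho^{2k+n}_{i,\nu}$ with each $\hat\rho^{2k+n}_{i,\nu}$ supported on $\Sym^{2k+n}(\cH,\nu^{\otimes n})$ and with $F\bigl(\tr_{2k}\proj{\Phi_i},\hat\rho^{2k+n}_i\bigr)>1-k^{d}e^{-k(k+1)/(4k+n)}$: joint concavity of the fidelity then yields the theorem upon averaging with weights $\lambda_i$, since $\sum_i\lambda_i\hat\rho^{2k+n}_i$ is again a convex combination of i.i.d.\ operators along $\cV$.

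\emph{The sector construction.} Fix a unit vector $\Phi\in\Sym^{4k+n}(\cH,\cHb^{\otimes 3k+n})$ and decompose $\Phi=\sum_{w=0}^{k}\Phi^{(w)}$ into its mutually orthogonal components having \emph{exactly} $w$ subsystems in $\cHb^\perp$. Via the symmetrization isometry, $\Phi^{(w)}$ corresponds to a vector $\Theta^{(w)}\in\Sym^{w}(\cHb^\perp)\otimes\Sym^{4k+n-w}(\cHb)$ whose second factor lies entirely in the $d$‑dimensional space $\cHb$; apply Lemma~\ref{lem:dF} with that $d$‑dimensional space and with parameters $k$ and $n':=2k+n-w$ (so $2k+n'=4k+n-w$ and $k+n'=3k+n-w$) to this factor. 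The isometry $U\colon\cHb^{\otimes k}\to\cH'$ and the net $\cV$ produced by Lemma~\ref{lem:dF} depend only on $k$ and $\cHb$ (the parameter $n'$ only fixes how fine $\cV$ must be), so one net $\cV$ and one $U$ can serve all $w\in\{0,\dots,k\}$ at once. Projecting onto the i.i.d.\ subspaces exactly as in Lemma~\ref{lem:dF} gives, linearly in $\Phi^{(w)}$, an approximant $\hat\Phi^{(w)}$ with $\spr{\hat\Phi^{(w)}}{(\id\otimes U)\Phi^{(w)}}>\|\Phi^{(w)}\|^{2}\bigl(1-k^{d}e^{-k(k+1)/(4k+n)}\bigr)$ — in fact with $4k+n$ sharpened to $4k+n-w$ — and whose $f_\nu$‑component is i.i.d.\ along $\nu$ on at least $2k+n-w$ of its $3k+n-w$ subsystems.

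\emph{Assembly and conclusion.} Re‑symmetrizing each sector over its $3k+n$ physical subsystems and taking the orthogonal sum over $w$ produces a unit vector $\hat\Phi$ on $\cH'\otimes\cH^{\otimes 3k+n}$ and an isometry $\mathcal{U}$ from $\Sym^{4k+n}(\cH,\cHb^{\otimes 3k+n})$ into $\cH'\otimes\cH^{\otimes 3k+n}$. Adding the sector estimates with $\sum_w\|\Phi^{(w)}\|^{2}=1$ gives $\spr{\hat\Phi}{\mathcal{U}\Phi}>1-k^{d}e^{-k(k+1)/(4k+n)}$. Tracing out $\cH'$ together with $k$ further physical subsystems leaves, in each sector, at least $k+n-w\ge n$ subsystems along $\nu$, so $\hat\rho^{2k+n}_\Phi:=\tr_{\cH'\otimes\cH^{\otimes k}}\proj{\hat\Phi}$ has the desired form $\sum_{\nu\in\cV}p_\nu\hat\rho^{2k+n}_\nu$ with each $\hat\rho^{2k+n}_\nu$ supported on $\Sym^{2k+n}(\cH,\nu^{\otimes n})$. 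Provided $\mathcal{U}$ is built so that $\tr_{\cH'\otimes\cH^{\otimes k}}\proj{\mathcal{U}\Psi}=\tr_{2k}\proj{\Psi}$ for every $\Psi$ in the restricted symmetric subspace, monotonicity of the fidelity under partial traces then gives $F\bigl(\tr_{2k}\proj{\Phi},\hat\rho^{2k+n}_\Phi\bigr)\ge|\spr{\hat\Phi}{\mathcal{U}\Phi}|>1-k^{d}e^{-k(k+1)/(4k+n)}$, which settles the pure case and hence, by the first step, the theorem.

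\emph{The main obstacle.} The genuinely delicate point is this last property of $\mathcal{U}$. When $\Phi$ already lies in $\Sym^{4k+n}(\cHb)$ one simply takes $\mathcal{U}=U\otimes\id^{\otimes 3k+n}$ acting on a fixed block of $k$ subsystems, and $\tr_{\cH'}\proj{\mathcal{U}\Psi}=\tr_{k}\proj{\Psi}$ is immediate because $U$ is an isometry touching only those $k$ subsystems; but since up to $k$ subsystems may lie outside $\cHb$ and are spread over the symmetrized state, $\mathcal{U}$ must be assembled sector by sector through the symmetrization maps, and one has to verify that the cross‑sector coherences are transported faithfully so that tracing out $\cH'$ and $k$ more subsystems exactly inverts the construction and returns $\tr_{2k}\proj{\Phi}$. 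Everything else — fixing $\cV$ uniformly in $w$, the bound $\dim\Sym^{k}(\cHb)\le k^{d}$ governing the $w=0$ term, and absorbing the finite‑net approximation of the Haar average — is routine and already contained in Lemma~\ref{lem:dF}.
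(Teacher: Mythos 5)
Your overall strategy is the same as the paper's (reduce to a pure state by joint concavity, decompose according to how many subsystems escape $\cHb$, apply Lemma~\ref{lem:dF} inside the $d$-dimensional space, then reassemble and invoke monotonicity of the fidelity under partial trace), but the step you yourself flag as ``the main obstacle'' is not a technicality to be checked later --- it is the crux, and your sector construction does not deliver it. Monotonicity only yields $F\bigl(\tr_{2k}\proj{\Phi},\hat\rho^{2k+n}_\Phi\bigr)\ge|\spr{\hat\Phi}{\mathcal{U}\Phi}|$ if $\tr_{\cH'\otimes\cH^{\otimes k}}\proj{\mathcal{U}\Phi}=\tr_{2k}\proj{\Phi}$, i.e.\ if $\mathcal{U}$ acts, on the subspace in question, as an isometry touching only the $2k$ subsystems that are traced out. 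Your $\mathcal{U}$ is assembled through the symmetrization isomorphisms $\Sym^{4k+n}(\cH)\supset(\text{sector }w)\cong\Sym^{w}(\cHb^\perp)\otimes\Sym^{4k+n-w}(\cHb)$, in which the $\cHb^\perp$ excursions and the ``first $k$'' factors fed to $U$ are delocalized over all $4k+n$ physical positions; after re-symmetrizing sector by sector there is no identification of which $k$ physical subsystems $U$ has consumed, so $\mathcal{U}$ is not of the form $(V\otimes\id^{\otimes 2k+n})$ with $V$ an isometry on the first $2k$ systems, and the required partial-trace identity is neither automatic nor proved. This is exactly the point the paper's more refined decomposition is designed to trivialize: there one decomposes with respect to mutually orthogonal projectors $Q_{\mathbf{j}}\otimes Q_{\mathbf{j'}}$ that pin down both the \emph{positions} and the \emph{basis vectors} of the out-of-$\cHb$ subsystems; the $\mathbf{j}$-part lives on the first $2k$ systems and becomes a classical mixture after $\tr_{2k}$, the $\mathbf{j'}$-part leaves orthogonal ``markers'' on the last $2k+n$ systems so that coherences between different $\mathbf{j'}$ contribute no cross terms, and the Lemma~\ref{lem:dF} isometry $U$ then acts on $k$ fixed positions among the first $2k$ systems which are guaranteed to lie in $\cHb$ --- whence tracing out $\cH'$ is literally the same as tracing out those $k$ systems.

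Two further points are glossed over. First, in each sector you apply Lemma~\ref{lem:dF} to the $\Sym^{4k+n-w}(\cHb)$ factor of $\Theta^{(w)}$, which is in general entangled with the $\Sym^{w}(\cHb^\perp)$ factor; the lemma as stated concerns a single unit vector in the symmetric subspace, so you need its uniform (operator-inequality) form, namely that the ``bad'' projector $\sum_\nu\proj{f_\nu}\otimes P^\perp_{\Sym^{k+n'}(\cHb,\nu^{\otimes n'})}$ has expectation at most $k^{d}e^{-k(k+1)/(2k+n')}$ on the whole symmetric subspace --- true, and implicit in the lemma's proof, but not in its statement, and you do not argue it (the paper avoids the issue because its finer decomposition makes each $\Phi_{\mathbf{j},\mathbf{j'}}$ an honest vector in $\Sym^{2k+n}(\cHb)$ and kills the cross terms by orthogonality of the $Q_{\mathbf{j'}}$). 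Second, membership of your final operators $\hat\rho^{2k+n}_\nu$ in $\Sym^{2k+n}(\cH,\nu^{\otimes n})$ requires them to be supported on the \emph{symmetric} subspace, not merely to have $n$ tensor factors along $\nu$; the paper secures this by sandwiching with $P_{\Sym^{2k+n}(\cH)}$, using that the true reduced state is exactly symmetric so the projection does not decrease the fidelity, and your write-up contains no analogue of this step. Until the partial-trace compatibility of $\mathcal{U}$ is actually constructed and verified, the proof is incomplete at its central point.
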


\begin{proof}
  It suffices to prove the claim for $\rho^{4k+n}$ pure; the statement
  for general density operators follows by the joint concavity of the
  fidelity (see, e.g., Chapter~9 of~\cite{NieChu00}).  Let thus $\Psi
  \in \Sym^{4k+n}(\cH, \cHb^{\otimes 3k+n})$. The idea is to write
  $\Psi$ as a superposition of vectors $\Psi_{\mathbf{j},\mathbf{j'}}$
  which have at least $2k+n$ subsystems contained in $\cHb$
  (see~\eqref{eq:superpos} and~\eqref{eq:superpostwo} below) so that
  we can apply Lemma~\ref{lem:dF} to each of them individually.

  Consider the decomposition of $P^{2k+n}_{\cHb^{\otimes k+n}}$ according
  to~\eqref{eq:Pkndef}, i.e., 
  \begin{align} \label{eq:Pknrep}
    P^{2k+n}_{\cHb^{\otimes k+n}}
  =
    \sum_{\substack{\mathbf{b} \in \{0,1\}^{2k+n}\\f_{\mathbf{b}} \leq \frac{k}{2k+n}}} P_{b_1} \otimes \cdots \otimes P_{b_{2k+n}} 
  \end{align}
  with $P_0 = P_{\cHb}$ and $P_1 = P_{\cHb^\perp}$. Furthermore, let
  $\{e_j\}_{j \in J}$ be a common eigenbasis of the projectors
  $P_{\cHb}$ and $P_{\cHb^\perp}$, let $J_0 := J \cup \{0\}$ (assuming
  that $0 \notin J$), and define the projectors $Q_j$, for $j \in
  J_0$, by
  \begin{align*}
    Q_j = \begin{cases} P_{\cHb} & \text{if $j = 0$} \\ \proj{e_j} & \text{if $j \in J$.} \end{cases}
  \end{align*}
  Then, starting from~\eqref{eq:Pknrep}, it is easy to construct a
  decomposition of $P^{2k+n}_{\cHb^{\otimes k+n}}$ into mutually
  orthogonal projectors $Q_{\mathbf{j'}} = Q_{j'_{1}} \otimes \cdots
  \otimes Q_{j'_{2k +n}}$, for $\mathbf{j'} = (j'_1, \ldots,
  j'_{2k+n})$,
  \begin{align} \label{eq:Pknone}
    P^{2k+n}_{\cHb^{\otimes k+n}} 
  = 
    \sum_{\mathbf{j'} \in \mathbf{J}^{2k+n}_{k+n}}   Q_{\mathbf{j'}} \ ,
  \end{align}
  where $\mathbf{J}^{2k+n}_{k+n}$ is a subset of $J_0^{2k+n}$
  containing only tuples $\mathbf{j'}$ with exactly $k+n$ indices
  $\tau$ such that $j'_\tau = 0$. Similarly, we can decompose
  $P^{2k}_{\cHb^{\otimes k}}$ in projectors $Q_{\mathbf{j}} = Q_{j_1}
  \otimes \cdots \otimes Q_{j_{2k}}$, for $\mathbf{j} = (j_1, \ldots,
  j_{2k})$,
  \begin{align} \label{eq:Pknone}
    P^{2k}_{\cHb^{\otimes k}}
  = 
    \sum_{\mathbf{j} \in \mathbf{J}^{2k}_{k}} Q_{j_1} \otimes \cdots \otimes Q_{j_{2k}}  \ ,
  \end{align}
  where $\mathbf{J}^{2k}_k$ only consists of tuples $\mathbf{j} \in
  J_0^{2k}$ with exactly $k$ indices $\tau$ such that $j_\tau = 0$.

  By definition, $\Sym^{4k+n}(\cH, \cHb^{\otimes 3k+n})$ is contained
  in the support of $P^{\otimes 4 k + n}_{\cHb^{\otimes 3k+n}}$, which
  is itself contained in the support of $P^{2k}_{\cHb^{\otimes k}}
  \otimes P^{2k + n}_{\cHb^{\otimes k+n}}$. Hence, any vector $\Psi
  \in \Sym^{4k+n}(\cH, \cHb^{\otimes 3k+n})$ can be written as a
  superposition
  \begin{align} \label{eq:superpos}
    \Psi = \sum_{\mathbf{j} \in \mathbf{J}^{2k}_k} \Psi_{\mathbf{j}}
  \end{align}
  where, for any $\mathbf{j} \in \mathbf{J}^{2k}_k$,
  \begin{align} \label{eq:superpostwo}
    \Psi_{\mathbf{j}} 
  =
    (Q_{\mathbf{j}} \otimes \id^{\otimes 2k+n}) \Psi
  = 
    \sum_{\mathbf{j'} \in \mathbf{J}^{2k + n}_{k+n}} \Psi_{\mathbf{j}, \mathbf{j'}} \, 
  \end{align}
  with $ \Psi_{\mathbf{j}, \mathbf{j'}} = Q_{\mathbf{j}} \otimes
  Q_{\mathbf{j'}} \Psi $. In particular, because of the orthogonality
  of the projectors $Q_{\mathbf{j}}$, we can write $\rho^{2k + n}$ as
  a convex combination
  \begin{align*}
    \rho^{2k + n}
  =
    \tr_{2k}(\proj{\Psi})
  & =
    \sum_{\mathbf{j} \in \mathbf{J}^{2k}_k} \tr_{2k}(\proj{\Psi_{\mathbf{j}}})  \\
  & =
    \sum_{\mathbf{j} \in \mathbf{J}^{2k}_k} p_{\mathbf{j}} \rho^{2k+n}_{\mathbf{j}} 
  \end{align*}
  with probabilities $p_{\mathbf{j}} = \tr(\proj{\Psi_{\mathbf{j}}})$
  and density operators
  \begin{align*}
    \rho^{2 k + n}_{\mathbf{j}} = \tr_{2k}(\proj{\tilde{\Psi}_{\mathbf{j}}}) \ ,
  \end{align*}
  where $\tilde{\Psi}_{\mathbf{j}}$ is a unit vector parallel to
  $\Psi_{\mathbf{j}}$. Because of the joint concavity of the fidelity,
  it is thus sufficient to show that~\eqref{eq:dFextclaim} holds for
  all density operators $\rho^{2k+n}_{\mathbf{j}}$.

  Let thus $\mathbf{j} \in \mathbf{J}^{2k}_k$ be fixed and let, for
  any $\mathbf{j'} \in \mathbf{J}^{2k +n}_{k+n}$,
  $\tilde{\Psi}_{\mathbf{j},\mathbf{j'}}$ be a normalization of
  $\Psi_{\mathbf{j}, \mathbf{j'}}$. We then have
  \begin{align*}
    \tilde{\Psi}_{\mathbf{j}} 
  =
    \sum_{\mathbf{j'} \in \mathbf{J}^{2k + n}_{k+n}} \alpha_{\mathbf{j},\mathbf{j'}} \tilde{\Psi}_{\mathbf{j},\mathbf{j'}} 
  \end{align*}
  where $\alpha_{\mathbf{j},\mathbf{j'}}$ are coefficients satisfying
  $\sum_{\mathbf{j'}} |\alpha_{\mathbf{j},\mathbf{j'}}|^2 = 1$.  We
  now apply Lemma~\ref{lem:dF} to each of the vectors
  $\tilde{\Psi}_{\mathbf{j},\mathbf{j'}}$ in the sum individually. For
  this, assume without loss of generality that $\mathbf{j} = (j_1,
  \ldots, j_k, 0, \ldots, 0)$ and $\mathbf{j'} = (0, \ldots, 0,
  j'_{k+n+1}, \cdots, j'_{2k + n})$ with $j_{1}, \ldots, j_{k},
  j'_{k+n+1}, \ldots, j'_{2k+n} \in J$ (this form can always be
  obtained by an appropriate reordering of the subsystems). The vector
  $\tilde{\Psi}_{\mathbf{j},\mathbf{j'}}$ can then be written as
  \begin{align*}
    \tilde{\Psi}_{\mathbf{j},\mathbf{j'}}
  =
    e_{j_1} \otimes \cdots \otimes e_{j_k} \otimes \Phi_{\mathbf{j},\mathbf{j'}} \otimes e_{j'_{k+n+1}} \otimes \cdots \otimes e_{j'_{2k+n}}
  \end{align*}
  where $\Phi_{\mathbf{j}, \mathbf{j'}} \in \Sym^{2k +
    n}(\cHb)$. According to Lemma~\ref{lem:dF}, there exists a vector
  $\hat{\Phi}_{\mathbf{j},\mathbf{j'}}$ of the form
  \begin{align*}
    \hat{\Phi}_{\mathbf{j},\mathbf{j'}}
  =
    {\textstyle \sqrt{\frac{1}{|\cV|}}} \sum_{\nu \in \cV} f_\nu \otimes \hat{\Phi}_{\mathbf{j}, \mathbf{j'}, \nu} \in {\textstyle \cH' \otimes \Sym^{k+n}(\cHb) }
  \end{align*}
  with $\hat{\Phi}_{\mathbf{j}, \mathbf{j'}, \nu} \in \Sym^{k+n}(\cHb,
  \nu^{\otimes n})$ such that
  \begin{align*}
    \spr{\hat{\Phi}_{\mathbf{j},\mathbf{j'}}}{(U \otimes \id^{\otimes k + n})\Phi_{\mathbf{j}, \mathbf{j'}}}
  >
    1 - k^d e^{-\frac{k(k+1)}{2k+n}} \ ,
  \end{align*}
  where $U$ is some fixed isometry (independent of
  $\mathbf{j'}$). With the definition
  \begin{align*}
   \hat{\Psi}_{\mathbf{j},\mathbf{j'}}
  =
    e_{j_1} \otimes \cdots \otimes e_{j_k} \otimes \hat{\Phi}_{\mathbf{j},\mathbf{j'}} \otimes e_{j'_{k+n+1}} \otimes \cdots \otimes e_{j'_{2k+n}}  
  \end{align*}
  this immediately implies 
  \begin{align*}
    \spr{\hat{\Psi}_{\mathbf{j},\mathbf{j'}}}{(\id^{\otimes k} \otimes U \otimes \id^{\otimes 2k + n})\tilde{\Psi}_{\mathbf{j}, \mathbf{j'}}}
  >
    1 - k^d e^{-\frac{k(k+1)}{2k+n}}  \ .
  \end{align*}
  Consider now the vector
  \begin{align*}
    \hat{\Psi}_{\mathbf{j}}
  :=
    \sum_{\mathbf{j'} \in \mathbf{J}^{2k+n}_{k+n}} \alpha_{\mathbf{j}, \mathbf{j'}} \hat{\Psi}_{\mathbf{j}, \mathbf{j'}} \ .
  \end{align*}
  Since, for any two distinct $\mathbf{j'}, \mathbf{j''} \in
  \mathbf{J}^{2k+n}_{k+n}$, the projectors $Q_{\mathbf{j'}}$ and
  $Q_{\mathbf{j''}}$ are mutually orthogonal by definition, we have
  \begin{multline*}
    \spr{\hat{\Psi}_{\mathbf{j}, \mathbf{j'}}}{(\id^{\otimes k}
    \otimes U \otimes \id^{\otimes 2k + n})\tilde{\Psi}_{\mathbf{j},
      \mathbf{j''}}}  \\
  =
    \spr{\hat{\Psi}_{\mathbf{j}, \mathbf{j'}}}{(\id^{\otimes k}
    \otimes U \otimes Q_{\mathbf{j'}} Q_{\mathbf{j''}} )\tilde{\Psi}_{\mathbf{j},
      \mathbf{j''}}} 
  = 
    0 \ .
  \end{multline*}
  Combining this with the above, we find
  \begin{multline*}
    \spr{\hat{\Psi}_\mathbf{j}}{(\id^{\otimes k} \otimes U \otimes \id^{\otimes 2k + n})\tilde{\Psi}_{\mathbf{j}}}  \\
  =
    \sum_{\mathbf{j'}} |\alpha_{\mathbf{j}, \mathbf{j'}}|^2     \spr{\hat{\Psi}_{\mathbf{j},\mathbf{j'}}}{(\id^{\otimes k} \otimes U \otimes \id^{\otimes 2k + n})\tilde{\Psi}_{\mathbf{j}, \mathbf{j'}}} \\
  >
    1 - k^d e^{-\frac{k(k+1)}{2k+n}}  \ .
  \end{multline*}

  This inequality can be rewritten in terms of the fidelity, which is
  simply the absolute value of the scalar product.  Together with the
  fact that tracing out subsystems can only increase the fidelity, we
  obtain
  \begin{align*}
    F\bigl(\tr_{2k}(\proj{\tilde{\Psi}_{\mathbf{j}}}), \tr_{\cH',k}(\proj{\hat{\Psi}_{\mathbf{j}}})\bigr)
  > 
    1 - k^d e^{-\frac{k(k+1)}{2k+n}} \ .
  \end{align*}
  Furthermore, because the density operator
  $\tr_{2k}(\proj{\tilde{\Psi}_{\mathbf{j}}})$ is contained in the
  symmetric subspace $\Sym^{2k+n}(\cH)$, we can insert a projection
  onto this subspace without changing the fidelity, i.e., 
  \begin{align*}
     F\bigl(\tr_{2k}(\proj{\tilde{\Psi}_{\mathbf{j}}}), \hat{\rho}^{2k+n}_{\mathbf{j}}\bigr)
  > 
    1 - k^d e^{-\frac{k(k+1)}{2k+n}} \ ,
  \end{align*}
  where 
  \begin{align*}
    \hat{\rho}^{2k+n}_{\mathbf{j}} := P_{\Sym^{2k+n}(\cH)}
  \tr_{k,\cH'}(\proj{\hat{\Psi}_{\mathbf{j}}}) P_{\Sym^{2k+n}(\cH)} \ .
  \end{align*} 
  
  It remains to verify that the density operator
  $\hat{\rho}^{2k+n}_{\mathbf{j}}$ is of the desired form
  \begin{align} \label{eq:convnu}
     \hat{\rho}_{\mathbf{j}}^{2k+n}
  =
    \sum_{\nu \in \cV} p_{\nu} \hat{\rho}_{\mathbf{j},\nu}^{2k +n}
  \end{align} 
  for some appropriately chosen probabilities $p_{\nu}$ and for
  $\hat{\rho}_{\mathbf{j},\nu}^{2k +n}$ contained in the subspace
  $\Sym^{2k+n}(\cH, \nu^{\otimes n})$. For this, we define
  \begin{align*}
    \hat{\rho}_{\mathbf{j},\nu}^{2k +n} := P_{\Sym^{2k+n}(\cH)} P_{\hat{\Psi}_{\mathbf{j}, \nu}} P_{\Sym^{2k+n}(\cH)}
  \end{align*}
  where $P_{\hat{\Psi}_{\mathbf{j}, \nu}}$ denotes the projector onto the
  vector
  \begin{align*}
    \hat{\Psi}_{\mathbf{j}, \nu} 
  :=
    \bra{e_{j_1} \otimes \cdots \otimes e_{j_k} \otimes f_{\nu}}{\hat{\Psi}_{\mathbf{j}}} \ .
  \end{align*}
  Identity~\eqref{eq:convnu} then follows from the orthogonality of
  the vectors $f_{\nu}$. Furthermore, by the definition of
  $\hat{\Psi}_{\mathbf{j}}$ and using the fact that the vectors
  $\hat{\Phi}_{\mathbf{j}, \mathbf{j'},\nu}$, for any fixed $\nu$ and
  arbitrary $\mathbf{j'}$, are contained in the support of
  $P^{k+n}_{\nu^{\otimes n}}$, one can readily verify that the vector
  $\hat{\Psi}_{\mathbf{j}, \nu}$ is contained in the support of
  $P^{2k+n}_{\nu^{\otimes n}}$. Consequently,
  $\hat{\rho}_{\mathbf{j},\nu}^{2k +n}$ lies in the subspace
  $\Sym^{2k+n}(\cH, \nu^{\otimes n})$.  
\end{proof}

\subsection{Properties of almost i.i.d.\ states} \label{sec:properties}

Theorem~\ref{thm:dFext} gives an approximation of permutation
invariant states in terms of almost i.i.d.\ states $\rho_{\nu}$. The
significance of this approximation comes from the fact that such
states are relatively easy to handle. In particular, their properties
very much resemble the properties of (perfect) i.i.d.\
states~\cite{Renner05,Renner07}. For example, the entropy of an almost
i.i.d.\ state $\rho_{\nu}$ is well approximated by the entropy of the
corresponding perfect i.i.d.\ state .

Of particular interest for information-theoretic applications is the
\emph{smooth min-entropy}~\cite{RenKoe05,Renner05}. Let $\rho_{X B}$
be a density operator on $\cH_X \otimes \cH_B$ which is classical on
$\cH_X$, i.e.,
\begin{align*}
  \rho_{X B} = \sum_{x \in \cX} p_x \proj{e_x} \otimes \rho^x_B \ ,
\end{align*}
for some orthonormal basis $\{e_x\}_{x \in \cX}$ of $\cH_X$,
probabilities $p_x$, and density operators $\rho^x_B$ on
$\cH_B$. Then, for any $\eps \geq 0$, the \emph{$\eps$-smooth entropy
  of $X$ given $B$}, denoted $H_{\min}^\eps(X|B)_\rho$, corresponds to
the amount of uniform randomness (relative to $B$) that can be
extracted from $X$ by two-universal hashing~\cite{WegCar81}. (The
smoothness parameter $\eps$ quantifies the quality of the resulting
randomness in terms of their distance to a random variable which is
perfectly uniform and independent of $B$).

For an i.i.d.\ state $\rho_{X B}^{\otimes N}$, the smooth min-entropy
$H_{\min}^\eps$ is asymptotically (for large $N$) equal to the von
Neumann entropy $S$, i.e.,
\begin{align*}
  \frac{1}{N} H_{\min}^\eps(X^N|B^N)_{\rho^{\otimes N}} 
& \approx
  \frac{1}{N} \bigl( S(\rho_{X B}^{\otimes N}) - S(\rho_B^{\otimes N}) \bigr) \\
& =
  S(\rho_{X B}) - S(\rho_B)
= S(X|B) \ .
\end{align*}

The following theorem from~\cite{Renner05} extends (one direction of)
this relation to almost i.i.d.\ states. 

\begin{theorem} \label{thm:entropy} Let $\rho_{X^{k+n} B^{k+n}}$ be a
  density operator on $(\cH_X \otimes \cH_B)^{\otimes k+n}$ which is
  classical on the subsystems $\cH_X$ and let $\eps > 0$. If there
  exists a purification of $\rho_{X^{k+n} B^{k+n}}$ in
  $\Sym^{k+n}({\cH_X \otimes \cH_B \otimes \cH_R}, \nu^{\otimes n})$,
  for some $\nu \in {\cH_X \otimes \cH_B \otimes \cH_R}$, then
  \begin{align*}
    \frac{1}{n} H_{\min}^\eps(X^{k+n}|B^{k+n})_{\rho^{k+n}}
  \geq
    S(\sigma_{X B}) - S(\sigma_B) - \delta \ ,
  \end{align*}
  where $\sigma_{X B} := \tr_{R}(\proj{\nu})$, 
  \begin{align*} 
    {\textstyle \delta := 5 (\ln(\dim\cH_X)+1) \sqrt{\frac{2 \ln(4/\eps)}{k+n} +
      h(\frac{k}{k+n})} }\ ,
  \end{align*}
  and $h(p) \equiv -p \ln p - (1-p) \ln (1-p)$.
\end{theorem}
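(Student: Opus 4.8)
The plan is to adapt the proof of the quantum asymptotic equipartition property (AEP) for the smooth min-entropy of i.i.d.\ states to the present ``almost i.i.d.''\ situation, in which the i.i.d.\ hypothesis is replaced by the assumption that a purification $\ket{\Psi}$ of $\rho^{k+n}_{X^{k+n}B^{k+n}}$ lies in the restricted symmetric subspace $\Sym^{k+n}(\cH_X\otimes\cH_B\otimes\cH_R,\nu^{\otimes n})$. Set $N:=k+n$ and $\sigma:=\sigma_{XB}=\tr_R\proj{\nu}$. Since the quantities involved behave monotonically under partial trace and concavely under mixing, it suffices to fix one such purification $\ket{\Psi}$ and argue about the density operator it induces on $X^NB^N$.

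First I would exploit the restricted-symmetric-subspace hypothesis. Decomposing $\cH_X\otimes\cH_B\otimes\cH_R=\mathrm{span}\{\nu\}\oplus\nu^\perp$, the vector $\ket{\Psi}$ is a superposition of product configurations in which at least $n$ of the $N$ tensor factors equal $\nu$ and at most $k$ lie in $\nu^\perp$. This has two consequences. (i) On the (at least $n$) non-defective positions the state coincides, after tracing out $\cH_R$, with the exact i.i.d.\ state $\sigma^{\otimes(\cdot)}$; this is what makes a comparison with a genuine i.i.d.\ state possible. (ii) The number of possible ``defect patterns'' is $\sum_{j\le k}\binom{N}{j}\le e^{N h(k/N)}$, with $h$ the natural-logarithm binary entropy; this combinatorial factor is the price of the comparison and gets absorbed into the smoothing parameter of the next step, which is precisely why $h(k/(k+n))$ appears \emph{under} the square root in $\delta$.

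Second I would invoke the quantum AEP for i.i.d.\ classical--quantum states from~\cite{Renner05}, which for $\sigma^{\otimes M}$ gives $\tfrac1M H_{\min}^{\eps'}(X^M|B^M)_{\sigma^{\otimes M}}\ge S(\sigma_{XB})-S(\sigma_B)-c\,(\ln\dim\cH_X+1)\sqrt{\ln(1/\eps')/M}$ for a universal constant $c$. Combining this with super-additivity of the smooth min-entropy and the non-negativity of $H_{\min}^{\eps}$ and $H_{\min}$ for a classical $X$ (which also takes care of the sign subtlety when the right-hand side is negative), the structural information of the first step is converted into a lower bound on $H_{\min}^\eps(X^{k+n}|B^{k+n})_\rho$: one takes $M=k+n$ and $\eps'$ of order $\eps\,e^{-N h(k/N)}$ so that $\ln(1/\eps')\le 2\ln(4/\eps)+(k+n)\,h(k/(k+n))$, whereupon the AEP error term becomes exactly $(\ln\dim\cH_X+1)\sqrt{2\ln(4/\eps)/(k+n)+h(k/(k+n))}$; dividing by $n$ (using $k+n\ge n$) then costs only a constant, which together with $c$ is absorbed into the factor $5$, giving the displayed form of $\delta$.

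The delicate point is making the first step rigorous. The naive idea of ``revealing the defect pattern'' by measuring each subsystem of $\cH_X\otimes\cH_B\otimes\cH_R$ with $\{\proj{\nu},\ \id-\proj{\nu}\}$ and recording the outcome on the conditioning side does not work directly, because this measurement acts jointly on $X$, $B$ and $R$ and therefore disturbs the very marginal $\rho^{k+n}_{X^{k+n}B^{k+n}}$ whose smooth min-entropy is to be bounded; one cannot append the outcome register ``for free''. The correct argument works inside the symmetric subspace, using the structural properties of $\Sym^{k+n}(\cH,\nu^{\otimes n})$ established in~\cite{Renner05} (in particular that it has the same dimension as $\Sym^k(\cH)$, and the behaviour on it of the typical projector of $\sigma^{\otimes N}$) to carry out the cut-off and the equipartition estimate directly at the level of $\ket{\Psi}$, the combinatorial factor $e^{N h(k/N)}$ then entering through a dimension count rather than through an explicit measurement. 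Matching the numerical constants so that everything collapses into the single expression for $\delta$ is routine once this reduction is in place.
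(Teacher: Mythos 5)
There is no in-paper proof to compare you against: the paper imports Theorem~\ref{thm:entropy} verbatim from~\cite{Renner05} (``The following theorem from~[Renner05]\ldots''), so the only question is whether your sketch would stand on its own. It does not, and the failure is at exactly the step you yourself flag as delicate. Your step~(i) --- that on the (at least $n$) non-defective positions the state, after tracing out $\cH_R$, ``coincides with the exact i.i.d.\ state $\sigma^{\otimes(\cdot)}$'' --- is false for a superposition: the defect-pattern branches of $\ket{\Psi}$ are mutually orthogonal on $\cH_X\otimes\cH_B\otimes\cH_R$, but after tracing out $R$ they are in general no longer orthogonal, so $\rho_{X^{k+n}B^{k+n}}$ is \emph{not} a mixture over patterns of states that are i.i.d.\ on $n$ positions. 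This is the same obstruction as the ``reveal the pattern by measuring $\{\proj{\nu},\id-\proj{\nu}\}$'' argument you correctly reject, and it cannot be repaired by counting patterns. Likewise, ``super-additivity of the smooth min-entropy'' plus non-negativity does not convert anything into a bound on $H_{\min}^\eps(X^{k+n}|B^{k+n})$: the whole difficulty is the conditioning on $B^{k+n}$, and what is actually needed is an operator-level comparison of the almost-i.i.d.\ state with the typical projectors of $\sigma^{\otimes(k+n)}$ (or, as in~\cite{Renner05}, a reduction through smooth unconditional min-/max-entropies of the $XB$- and $B$-systems), with the combinatorial factor $e^{(k+n)h(k/(k+n))}$ --- equivalently the dimension bound for $\Sym^{k+n}(\cH,\nu^{\otimes n})$ in terms of $\Sym^{k}(\cH)$ --- entering through that comparison and degrading the smoothing parameter.

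Your closing paragraph concedes this and defers the entire mechanism to ``structural properties established in~\cite{Renner05}'', which is circular in this context: \cite{Renner05} is precisely the source of the theorem to be proved, and the properties you would need there (behaviour of the typical projector of $\sigma^{\otimes(k+n)}$ on the restricted symmetric subspace, and the passage from unconditional to conditional smooth entropies) constitute the proof, not a routine lemma one may quote while proving it. The final bookkeeping --- choosing $\eps'\sim\eps\,e^{-(k+n)h(k/(k+n))}$ so that the AEP error term reassembles into $\delta=5(\ln\dim\cH_X+1)\sqrt{2\ln(4/\eps)/(k+n)+h(k/(k+n))}$ --- is plausible and matches the shape of the stated bound, but as it stands it is a reverse-engineering of the answer rather than a derivation: the central implication (purification in $\Sym^{k+n}(\cH_X\otimes\cH_B\otimes\cH_R,\nu^{\otimes n})$ implies an AEP-type bound at the cost of an exponential-in-$h(k/(k+n))$ loss in the smoothing parameter) is asserted, not proved.
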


Note that the statement depends on the dimension of $\cH_X$, but is
independent of the dimension of $\cH_B$.

\section{Implications} 

\subsection{Putting things together} \label{sec:combination}

The aim of this section is to demonstrate how the technical statements
of Section~\ref{sec:tech} can be combined to give our main claim,
namely that any permutation invariant state $\rho^N$ on $\cH^{\otimes
  N}$ is approximated by a mixture of states with almost i.i.d.\
structure, provided the outcomes of certain measurements on a (small)
sample of the subsystems lie in a given range. To illustrate this, we
assume for concreteness that $\cH = {\rm L}^2(\mathbb{R})$ and that
measurements on $k$ subsystems are carried out with respect to two
canonical observables $X$ and $Y$, each chosen with probability
$\frac{1}{2}$. (According to Remark~\ref{rem:lowdimbound}, the
argument below can easily be extended to more general measurements.)
Furthermore, we assume that $N = m^4$ and $k=m^3$, for some $m \in
\mathbb{N}$.

Let $d = m^{\frac{3}{2}}$ and let $\cHb$ be the support of $P^{X^2 +
  Y^2 \leq n_0}$ for some $n_0 \in \mathbb{N}$ such that $12 \ln (7 m)
\leq n_0 \leq d$.  We first apply Lemma~\ref{lem:lowdimbound} to infer
that, if all $k$ measurement outcomes $z_1, \ldots, z_k$ satisfy
$z_i^2 \leq \frac{n_o}{2}$ then the state $\rho^{(m-1)k}$ on the
remaining $(m-1)k$ subsystems is almost certainly contained in the
support of $P^{(m-1)k}_{\cHb^{\otimes (m-2)k}}$. Hence, according to
Lemma~\ref{lem:pur}, there exists a purification $\bar{\rho}^{(m-1)k}$
of $\rho^{(m-1)k}$ on $\Sym^{(m-1)k}({\cH \otimes \cH}, {(\cHb \otimes
  \cHb)}^{\otimes (m-3)k})$.  Theorem~\ref{thm:dFext} now provides an
approximation of the reduced state $\bar{\rho}^{(m-5) k}$ in terms of
a mixture of almost i.i.d.\ states $\hat{\rho}^{(m-5) k}_{\nu}$,
parametrized by $\nu \in \cHb \otimes \cHb$. More precisely, each
density operator $\hat{\rho}^{(m-5) k}_{\nu}$ is contained in
$\Sym^{(m-5) k}(\cH \otimes \cH, \nu^{\otimes ((m-9) k})$, and their
convex combination is exponentially (in $m$) close to
$\bar{\rho}^{(m-5) k}$. In particular, by taking the trace over the
purifying systems, we conclude that the reduced state $\rho^{(1-\mu)
  N}$ is approximated by a mixture of states that have i.i.d.\
structure on $(1- \mu - \mu') N$ subsystems, where $\mu = 5
N^{-\frac{1}{4}}$ and $\mu' = 4 N^{-\frac{1}{4}}$.

\subsection{Application to QKD} \label{sec:application}

A main application of de Finetti's representation theorem is in the
area of quantum information theory. As explained in the introduction,
the theorem can be employed for the analysis of schemes involving a
large number of information carriers, whose joint state may be
difficult to describe in general. A typical and practically relevant
example is QKD, where the challenge is to find security proofs that
take into account all possible attacks of an adversary.

Most QKD protocols can be subdivided into two parts. In the first
part, also known as \emph{distribution phase}, the two legitimate
parties, Alice and Bob, use an (insecure) quantum communication
channel in order to distribute correlated information. (Alternatively,
in an entanglement-based scheme~\cite{Ekert91}, Alice and Bob receive
this correlated information as an input from an external source, which
may be controlled by an adversary.) In the second part, the
\emph{distillation phase}, Alice and Bob process this information to
extract a pair of secret keys. This process usually only involves
classical communication (over an authentic channel).

The analysis based on de Finetti's theorem sketched below applies to a
large class of QKD schemes, which includes almost all protocols
proposed in the literature~\footnote{Among the few exceptions are the
  \emph{Differential Phase Shift (DPS)}~\cite{InWaYa02} and the
  \emph{Coherent One-Way (COW) Protocol}~\cite{SBGSZ05}. Both rely on
  measurements involving two subsequent signals at the same time, so
  that the order in which the signals are received is relevant.}.
More concretely, the following conditions must hold.
\begin{enumerate}
\item \label{pr:sym} We assume that the information held by Alice and
  Bob after the distribution phase consists of $N$ parts, for some
  sufficiently large $N$. The protocol should be invariant under
  permutations of these parts. This requirement is usually satisfied
  because each of the $N$ signals is prepared, sent, and received
  independently of the other signals.
\item \label{pr:dist} In the last step of the distillation phase, the
  final key is computed in a classical post-processing procedure
  consisting of information reconciliation (error correction) and
  privacy amplification by two-universal hashing~\cite{WegCar81}. As
  yet, no alternative method for distilling the final key is known, so
  this criterion is not restrictive~\footnote{Note that the
    distillation phase may involve other steps such as \emph{sifting}
    or \emph{advantage distillation}, but these need to be carried out
    on single signals (or small blocks of signals) independently.}.
\item \label{pr:test} The protocol must perform a measurement $\cM =
  \{M_z\}_{z \in \cZ}$ on a sample of the received signals and only
  continue if all outcomes $z$ are contained in a given set $\cZb
  \subset \cZ$ that allows to conclude that the dimension of the
  relevant Hilbert space $\cHb$ is finite (cf.\
  Remark~\ref{rem:lowdimbound}). Note that this requirement is
  trivial if the signal space already has small dimension.

  A concrete example in $\cH = {\rm L}^2(\mathbb{R})$ are measurements
  $\cM$ with respect to two canonical observables $X$ and $Y$, each of
  them chosen with probability $\frac{1}{2}$. The set $\cZb$ can then
  be defined as the set of all outcomes $z$ such that $z^2 \leq
  \frac{n_0}{2}$ and $\cHb$ is the space spanned by the eigenvectors
  of $X^2 + Y^2$ corresponding to eigenvalues larger than $n_0$, for
  some appropriately chosen $n_0$ (see Lemma~\ref{lem:Nconstraint} and
  Lemma~\ref{lem:lowdimbound}).
\end{enumerate}

According to Property~\ref{pr:sym}, if a key distilled from $N$
signals in state $\rho^{N}$ is secure then the same is true for the
key distilled from a permuted state $\pi \rho^N \pi^\dagger$, for any
permutation $\pi \in S_N$. We can thus assume without loss of
generality that the $N$ signals are permuted at random and, hence,
their state $\rho^N$ is permutation invariant. Now, according to the
argument in Section~\ref{sec:combination} and using
Property~\ref{pr:test}, we conclude that $\rho^{N'}$, for some $N'
\approx N$, is approximated by a mixture of almost i.i.d.\ states
$\rho_\nu$ (see previous section for explicit parameters). Finally, we
use Property~\ref{pr:dist}, which implies that the only relevant
quantity is the smooth min-entropy~\cite{RenKoe05} of the measured
data $X^N$ conditioned on the adversary's information $E^N$
(see~\cite{Renner05} for a detailed argument). By
Theorem~\ref{thm:entropy}, the smooth min-entropy of almost i.i.d.\
states is approximated by the corresponding entropy of i.i.d.\
states~\footnote{Theorem~\ref{thm:entropy} can be applied because
  $X^N$ usually is a sequence of digitally represented values, so that
  $\cH_X$ has finite dimension.} Hence, we can without loss of
generality assume that $\rho^N$ is an i.i.d.\ state, which could
equivalently be the result of a collective attack. Summarizing, we
have thus proved that any QKD protocol satisfying the above three
conditions is secure against general attacks whenever it is secure
against collective attacks.

\section{Conclusions}

We have shown that permutation invariant states on large $N$-partite
systems are approximated by a convex combination of almost i.i.d.\
states, provided measurements on a few subsystems with respect to
certain observables only give bounded values.  In particular, under
this condition, a permutation invariant state can be considered equal
to an unknown i.i.d.\ state, except an arbitrarily small fraction of
the subsystems. This has various implications. Of particular interest
to experimental physics is that state tomography can be employed
without the need for i.i.d.\ assumptions, as discussed
in~\cite{Renner07} for the special case of low-dimensional systems.

Applied to quantum cryptography, our result enables full security
proofs for QKD schemes in the (practically relevant case) where the
dimension of the signal space may be unbounded. This is an intrinsic
property of continuous variable protocols, but the necessity of taking
into account infinite-dimensional systems may also arise in the
analysis of discrete variable schemes, for instance when they are
implemented using weak coherent pulses (see, e.g., \cite{BLMS00}). The
security of these schemes has been investigated intensively, but most
proofs are only valid under the assumption of collective attacks (see
Introduction for references and~\footnote{A remarkable exception are
  QKD schemes using Gaussian states, for which security against
  general attacks can also be proved using a result on the extremality
  of Gaussian states~\cite{WoGiCi06}, as shown in work done in
  parallel to ours~\cite{LKGC08}.}). The de Finetti representation
theorem derived here allows to drop this assumption, implying that
security holds against all possible attacks. The main requirement is
that certain tests are carried out on a sample of the transmitted
signals. For continuous variable protocols with signal states on $\cH
= {\rm L}^2(\mathbb{R})$, one possibility is to check that
measurements with respect to two canonical observables only result in
small outcomes. By modifying Lemma~\ref{lem:Nconstraint}, one may
replace this requirement by a criterion based on alternative
measurable quantities such as the photon number~\cite{Zhao08}.

\section{Acknowledgments}

IC acknowledges support from the EU project COMPAS and Caixa
Manresa. RR received support from the EU project SECOQC. He also would
like to thank Johan {\AA}berg for interesting and informative
discussions.

% We have shown that the exponential de Finetti representation theorem
% of~\cite{Renner07} can be extended to states on infinite-dimensional
% systems, provided that measurements on a few subsystems only result in
% outcomes corresponding to states with support on a low-dimensional
% subsystem.

\appendix

\section{Proof of Lemma~\ref{lem:probbound}}

The proof is based on the following lemma, which states that the
statistics obtained from the observation of $k$ out of $k+n$ binary
values $X_1, X_2, \ldots, X_{k+n}$ gives a good estimate for the
probability distribution of any of the remaining values, provided the
overall distribution is permutation invariant.

\begin{lemma} \label{lem:statist} Let $n \geq k$ and let $P_{X_1,
    \ldots, X_{k+n}}$ be a permutation invariant probability
  distribution over $\{0, 1\}^{k+n}$. Then
  \begin{align*}
    \Pr\bigl[
      | p_{| X_1 \cdots \ldots X_k} - f_{X_1 \cdots X_k} | \geq \delta
    \bigr] \leq 2 k^{\frac{3}{2}} e^{-k \delta^2} \ ,
  \end{align*}
  where, for any $\mathbf{x} = (x_1, \ldots, x_k)$, $p_{|\mathbf{x}}$
  denotes the probability that $X_{k+1} = 1$ conditioned on $(X_1,
  \ldots, X_k) = \mathbf{x}$.
\end{lemma}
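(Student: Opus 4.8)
The plan is to exploit that, for a permutation-invariant $P$, the binary sequence $(X_1,\dots,X_{k+n})$ is \emph{exchangeable}, so that observing $k$ of the $k+n$ bits is exactly a draw of a simple random sample, without replacement, from the "population" $(X_1,\dots,X_{k+n})$ itself. The sample frequency should then estimate the population frequency, and the posterior prediction for a further bit should coincide with the population frequency as well; comparing the two gives the claim.

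First I would record the elementary identity that makes the statement plausible. By permutation invariance, for every $l\in\{k+1,\dots,k+n\}$ one has $\Pr[X_l=1\mid X_1\cdots X_k]=p_{\mid X_1\cdots X_k}$ (apply the transposition swapping positions $l$ and $k+1$, which fixes $\{1,\dots,k\}$), hence
\[
  p_{\mid X_1\cdots X_k}\;=\;E_P\bigl[f_{X_{k+1}\cdots X_{k+n}}\,\big|\,X_1\cdots X_k\bigr]\ .
\]
Thus $p_{\mid X_1\cdots X_k}$ is the conditional expectation, given the sample, of the empirical frequency of the $n\ge k$ \emph{unobserved} bits; in particular it depends on $(X_1,\dots,X_k)$ only through the number $S$ of $1$'s, so one may write $p_{\mid X_1\cdots X_k}=\phi(S)$ and $f_{X_1\cdots X_k}=S/k$.

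Next I would condition on the total number of ones $T=\sum_{i=1}^{k+n}X_i$. Given $T=j$, permutation invariance forces $(X_1,\dots,X_{k+n})$ to be uniform over all strings with $j$ ones, so $S$ is hypergeometrically distributed and $f_{X_1\cdots X_k}=S/k$ concentrates sharply around the population frequency $j/(k+n)$: a Hoeffding/Chernoff estimate for sampling without replacement makes $\Pr[\,|S/k-j/(k+n)|\ge t\mid T=j\,]$ exponentially small in $kt^2$, and the sharp (Stirling-type) prefactor is of order $\sqrt k$ because each point mass of $S$ is $O(1/\sqrt k)$. The same estimate applies, a fortiori, to the frequency of the $n\ge k$ remaining bits. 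Since $p_{\mid X_1\cdots X_k}$ is the conditional expectation of that (concentrated) frequency, it too is pinned near $j/(k+n)$; a triangle inequality then gives $|p_{\mid X_1\cdots X_k}-f_{X_1\cdots X_k}|\le\delta$ on the event $T=j$ with the claimed probability. Finally I would average over $j$ (the bound being uniform in $j$) and union-bound over the at most $k+1$ values of $S$; this last step converts the per-value estimates into the global bound and produces the polynomial prefactor $2k^{3/2}$.

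The delicate point — and, I expect, the reason the proof is deferred to an appendix — is the step that controls the posterior prediction $\phi(S)=E_P[f_{X_{k+1}\cdots X_{k+n}}\mid X_1\cdots X_k]$ for a \emph{completely arbitrary} permutation-invariant $P$ (equivalently, an arbitrary mixture over the populations $T=j$). A crude conditional Markov inequality applied to this conditional expectation only yields a bound polynomial, not exponential, in $k\delta^2$; one must instead argue — after conditioning on $T$ and estimating the hypergeometric point masses $\Pr[S=s\mid T=j]$ explicitly — that the posterior reweighting over $j$ cannot move $\phi(S)$ a distance $\delta$ away from $S/k$ except on an exponentially small set, and then keep track of the constants carefully enough to land the exponent $e^{-k\delta^2}$ with constant $1$ rather than a smaller constant. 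Everything else — exchangeability, the hypergeometric tail bound, and the union bound over $S$ — is routine.
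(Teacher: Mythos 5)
There is a genuine gap, and it sits exactly where you flag it: your argument never actually controls the posterior prediction $p_{|\mathbf{X}}$. Conditioning on the total number of ones $T=j$ does not help directly, because $p_{|\mathbf{x}}$ is a fixed function of $\mathbf{x}$ determined by the \emph{whole} mixture $P$ (it averages $ (j'-s)/n$ over the posterior on $j'$ given $\mathbf{x}$), so on the event $T=j$ it is not a quantity "pinned near $j/(k+n)$" by any hypergeometric tail bound; the assertion that the posterior reweighting over type classes cannot move $p_{|\mathbf{X}}$ a distance $\delta$ from $f_{\mathbf{X}}$ except on an exponentially small set is essentially a restatement of the lemma, and your sketch offers no mechanism to prove it — only the (correct) observation that a naive Markov bound on the conditional expectation is too weak. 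So as written this is a plan with the central step missing, not a proof.

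The paper closes precisely this gap with a convexity trick. Using your identity in the form $p_{|\mathbf{x}}=\mathrm{E}[f_{X_{k+1}\cdots X_{2k}}\mid \mathbf{X}=\mathbf{x}]$ (comparing with a second block of only $k$ unobserved positions, which is where $n\ge k$ enters), Jensen's inequality for the convex map $x\mapsto e^{k x^2}$ gives
\begin{align*}
e^{k(p_{|\mathbf{x}}-f_{\mathbf{x}})^2}\;\le\;\mathrm{E}\bigl[e^{k(f_{X_{k+1}\cdots X_{2k}}-f_{\mathbf{X}})^2}\,\big|\,\mathbf{X}=\mathbf{x}\bigr]\ ,
\end{align*}
so the problematic conditional expectation disappears: it suffices to bound the \emph{unconditional} exponential moment $\mathrm{E}[e^{k(f_{X_{k+1}\cdots X_{2k}}-f_{X_1\cdots X_k})^2}]\le 2k^{3/2}$ for an arbitrary permutation-invariant distribution on $2k$ bits, and then apply Markov's inequality to $e^{k(\cdot)^2}$. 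At that point your type-class picture is exactly right and is legitimate (decompose the permutation-invariant law into fixed-type components, where the split of ones between the two blocks is hypergeometric), and the paper finishes with explicit binomial-coefficient estimates plus the inequality $2h(\tfrac{\alpha+\beta}{2})-h(\alpha)-h(\beta)\ge(\alpha-\beta)^2$, which is what produces the clean exponent $e^{-k\delta^2}$ and the $k^{3/2}$ prefactor. If you add the Jensen/exponential-moment step (or an equivalent device), the rest of your outline can be carried through; without it, the "delicate point" remains unresolved.
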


\begin{proof}
  Let $\mathbf{X} = (X_1, \ldots, X_k)$.  We show that
  \begin{align} \label{eq:Expbound}
    \Expect\bigl[
      e^{k ( p_{|\mathbf{X}} - f_{\mathbf{X}} )^2}
    \bigr] \leq 2 k^{\frac{3}{2}} \ ,   
  \end{align}
  where $\Expect[\cdot]$ denotes the expectation value.  The claim
  then follows because, by Markov's inequality,
  \begin{multline*}
    \Pr\bigl[
      | p_{|\mathbf{X}} - f_{\mathbf{X}} | \geq \delta
    \bigr] 
  =
    \Pr\bigl[
      e^{k ( p_{|\mathbf{X}} - f_{\mathbf{X}} )^2} \geq e^{k \delta^2}
    \bigr] \\
  \leq 
    \Expect\bigl[
      e^{k ( p_{|\mathbf{X}} - f_{\mathbf{X}} )^2}
    \bigr] e^{-k \delta^2} 
  \leq
    2 k^{\frac{3}{2}} e^{-k \delta^2} \ .
  \end{multline*}

  To show~\eqref{eq:Expbound} we use the observation that, for any
  permutation invariant distribution $P_{Z_1 \cdots Z_k}$ of binary
  values, the distribution of any individual value $Z_i$ equals the
  expectation of the frequency distribution of the whole tuple $(Z_1,
  \ldots, Z_k)$, i.e., 
  \begin{align*}
    \Pr[Z_i=1] = \Expect[f_{Z_1 \cdots Z_k}] \ .
  \end{align*}
  In particular, we have for any $\mathbf{x} = (x_1, \ldots, x_k)$,
  \begin{align*}
    p_{|\mathbf{x}}
  =
    \Expect[f_{X_{k+1} \cdots X_{2k}}|\mathbf{X} = \mathbf{x}] \ .
  \end{align*}
  Using convexity of the function $x \mapsto e^{x^2}$, we get
  \begin{multline*}
    e^{k (p_{|\mathbf{x}} - f_{\mathbf{x}})^2}
  =
    e^{k \Expect[f_{X_{k+1} \cdots X_{2k}} - f_{\mathbf{X}}|\mathbf{X} = \mathbf{x}]^2} \\
  \leq
    \Expect\bigl[ e^{k (f_{X_{k+1} \cdots X_{2k}} - f_{\mathbf{X}})^2} \bigr|\mathbf{X} = \mathbf{x}]
  \end{multline*}
  and, hence,
  \begin{align*}
    \Expect\bigl[e^{k (p_{|\mathbf{X}} - f_{\mathbf{X}})^2} \bigr]
  \leq
    \Expect\bigl[e^{k (f_{X_{k+1} \cdots X_{2k}} - f_{\mathbf{X}})^2} \bigr] \ .
  \end{align*}
  It thus remains to be shown that
  \begin{align} \label{eq:exponentexpect}
     \Expect\bigl[e^{k (f_{X_{k+1} \cdots X_{2k}} - f_{X_1 \cdots X_k})^2} \bigr] 
  \leq 
     2 k^{\frac{3}{2}} \ .
  \end{align}
  for any permutation invariant distribution $P_{X_1 \cdots
    X_{2k}}$.

  Because any permutation invariant distribution can be written as a
  convex combination of permutation invariant distributions with fixed
  frequency distribution, we can without loss of generality assume
  that $f_{X_1 \cdots X_{2k}} = \frac{r}{2k}$ holds with certainty for
  any fixed $r \in \{0, \ldots, 2k\}$. The expectation value on the
  left hand side of~\eqref{eq:exponentexpect} is then given explicitly
  as
  \begin{multline} \label{eq:expectexpl}
    \Expect\bigl[e^{k (f_{X_{k+1} \cdots X_{2k}} - f_{X_1 \cdots X_k})^2} \bigr]  \\
  =
    \sum_{s=\max(0,r-k)}^{\min(r, k)} \frac{\binom{k}{s} \binom{k}{r-s}}{\binom{2k}{r}} e^{k (\frac{s}{k} - \frac{r-s}{k})^2} \\
  =
    \sum_{s=\max(0,r-k)}^{\min(r, k)} e^{- k (r_{k,r,s} - (\frac{s}{k} - \frac{r-s}{k})^2)}
  \end{multline}
  where
  \begin{align} \label{eq:rkrsdef}
    r_{k,r,s}
  & =
    \frac{1}{k} \ln \frac{\binom{2k}{r}}{\binom{k}{s} \binom{k}{r-s}} \ .
  \end{align}

  To bound the term $r_{k,r,s}$ we use an approximation of the
  binomial coefficient by Wozencraft and Reiffen~\cite{WozRei61} (see
  also Lemma~17.5.1 of~\cite{CovTho91})
  \begin{align*}
    \frac{e^{N h(p)}}{\sqrt{8 N g(p)}}
  \leq
    \binom{N}{p N} 
  \leq
    \frac{e^{N h(p)}}{\sqrt{\pi N g(p)}} \ ,
  \end{align*}
  where $h(p) \equiv - p \ln p - (1-p) \ln (1-p)$ is the binary
  entropy function (written with respect to the basis $e$) and where
  $g(p) \equiv p (1-p)$. The approximation holds for any $N \in
  \mathbb{N}$ and $0 < p < 1$ such that $p N \in \mathbb{N}$. Because
  $g(p) \leq \frac{1}{4}$ for any $p$, the first inequality implies
  \begin{align*}
    \binom{N}{p N} 
  \geq
    \frac{e^{N h(p)}}{\sqrt{2 N }} \ .
  \end{align*}
  Furthermore, since $g(p) \geq \frac{1}{2 N}$ for any $N > 1$ and
  $\frac{1}{N} \leq p \leq 1-\frac{1}{N}$, the second inequality
  implies the well known upper bound
  \begin{align*}
    \binom{N}{p N} 
  \leq
    e^{N h(p)} \ ,
  \end{align*}
  which also holds for $N=1$, $p=0$, and $p=1$. Inserting these bounds
  into~\eqref{eq:rkrsdef}, we find
  \begin{align} \label{eq:rkrsbound}
    r_{k,r,s}
  & \geq 
    2h(\frac{r}{2k}) - h(\frac{s}{k}) - h(\frac{r-s}{k})  
    - \frac{1}{2 k} \ln(4 k ) \ .
  \end{align}

  Using some standard analysis, one finds that
  \begin{align*}
    2 h(\frac{\alpha+\beta}{2} ) - h( \alpha) - h(\beta)
  \geq
    (\alpha - \beta)^2 
  \end{align*}
  for any $\alpha, \beta \in [0,1]$. Combining this
  with~\eqref{eq:rkrsbound} and inserting in~\eqref{eq:expectexpl}
  yields~\eqref{eq:exponentexpect} and thus concludes the proof.
\end{proof}

The following lemma is an immediate corollary of
Lemma~\ref{lem:statist}, applied to the sequence of values obtained
from measurements of a permutation invariant state.

\begin{lemma} \label{lem:quantstatist} Let $n \geq k$, let $\cU =
  \{U_0, U_1\}$ be a binary POVM on $\cH$, let $\rho^{k+n} \in
  \cS(\cH^{\otimes k+n})$ be permutation invariant, and let $(X_1,
  \ldots, X_k)$ be the outcome of the measurement $\cU^{\otimes k}$
  applied to the first $k$ subsystems of $\rho^{k+n}$. Then
  \begin{align*}
    \Pr\bigl[|\tr(U_1 \rho^1_{|X_1 \cdots X_k})  - f_{X_1 \cdots X_k}| \geq \delta \bigr]
  \leq
    2 k^{\frac{3}{2}} e^{-k\delta^2} \ ,
  \end{align*}
  where, for any $\mathbf{x} = (x_1, \ldots, x_k)$,
  $\rho^1_{|\mathbf{x}}$ is the reduced state on a single subsystem
  conditioned on the measurement outcome $(X_1, \ldots, X_k) =
  \mathbf{x}$.
\end{lemma}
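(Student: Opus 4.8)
The plan is to reduce the claim to Lemma~\ref{lem:statist} by first passing from the quantum state to classical measurement data on \emph{all} $k+n$ subsystems. First I would apply the POVM $\cU^{\otimes k+n}$ to all $k+n$ subsystems of $\rho^{k+n}$, obtaining a tuple of binary random variables $(X_1, \ldots, X_{k+n})$. Since the $k$ measurements appearing in the statement act on distinct subsystems, the marginal distribution of $(X_1, \ldots, X_k)$ is unchanged by this extension; and since $\rho^{k+n}$ is permutation invariant while $\cU^{\otimes k+n}$ is a tensor power of one fixed POVM, the joint distribution $P_{X_1 \cdots X_{k+n}}$ on $\{0,1\}^{k+n}$ is again permutation invariant. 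As $n \geq k$, Lemma~\ref{lem:statist} then applies and yields
\[
  \Pr\bigl[ | p_{|X_1 \cdots X_k} - f_{X_1 \cdots X_k} | \geq \delta \bigr] \leq 2 k^{\frac{3}{2}} e^{-k \delta^2} \ ,
\]
where $p_{|\mathbf{x}} = \Pr[X_{k+1} = 1 \mid (X_1, \ldots, X_k) = \mathbf{x}]$.

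It then remains to identify $p_{|\mathbf{x}}$ with $\tr(U_1 \rho^1_{|\mathbf{x}})$. Writing $M_{\mathbf{x}} = \sqrt{U_{x_1}} \otimes \cdots \otimes \sqrt{U_{x_k}}$ for a Kraus operator realizing outcome $\mathbf{x}$ on the first $k$ subsystems, and $p_{\mathbf{x}} = \Pr[(X_1, \ldots, X_k) = \mathbf{x}]$, the conditional state $\rho^1_{|\mathbf{x}}$ of subsystem $k+1$ is obtained from $(M_{\mathbf{x}} \otimes \id^{\otimes n}) \rho^{k+n} (M_{\mathbf{x}}^\dagger \otimes \id^{\otimes n})$, normalized by $p_{\mathbf{x}}$, by tracing out all subsystems except the $(k+1)$-th. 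Since $U_1$ acts on subsystem $k+1$ and $M_{\mathbf{x}}$ acts on subsystems $1, \ldots, k$, cyclicity of the trace gives $\tr(U_1 \rho^1_{|\mathbf{x}}) = p_{\mathbf{x}}^{-1} \tr\bigl[(U_{x_1} \otimes \cdots \otimes U_{x_k} \otimes U_1 \otimes \id^{\otimes n-1}) \rho^{k+n}\bigr] = \Pr[X_{k+1} = 1 \mid (X_1, \ldots, X_k) = \mathbf{x}] = p_{|\mathbf{x}}$; in particular the right-hand side does not depend on the Kraus representation chosen for $\cU$. Substituting this identity into the bound above gives the lemma.

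I do not expect a genuine obstacle here: the statement is, as claimed, an immediate corollary of Lemma~\ref{lem:statist}. The only points requiring (minor) care are the consistency of the measurement statistics under the passage from $k$ to $k+n$ measured subsystems, and the identification of $\tr(U_1 \rho^1_{|\mathbf{x}})$ with the conditional probability $p_{|\mathbf{x}}$ — both of which follow from the fact that operators acting on disjoint tensor factors commute.
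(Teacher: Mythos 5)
Your proposal is correct and matches the paper's intent: the paper itself treats this lemma as an immediate corollary of Lemma~\ref{lem:statist}, obtained exactly as you do, by viewing the outcomes of a (hypothetical) measurement $\cU^{\otimes k+n}$ on all subsystems as a permutation invariant classical distribution and identifying $\Pr[X_{k+1}=1 \mid \mathbf{X}=\mathbf{x}]$ with $\tr(U_1 \rho^1_{|\mathbf{x}})$. Your explicit verification of that identification (Kraus-representation independence and cyclicity on the measured factors) is exactly the routine step the paper leaves implicit.
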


We are now ready to prove Lemma~\ref{lem:probbound}

\begin{proof}[Proof of Lemma~\ref{lem:probbound}]
  Let $\mathbf{X}^k := (X_1, \ldots, X_k)$ and $\mathbf{X}^{n/2} :=
  (X_{k+1}, \ldots, X_{k + n/2})$ (where, for simplicity, we assume
  that $n$ is even). Applying Lemma~\ref{lem:quantstatist} to the
  density operator $\rho_{|\mathbf{X}^{n/2}}$ describing the state
  conditioned on the outcomes of measurement $\cV^{\otimes n/2}$
  applied to $n/2$ subsystems of $\rho^{k+n}$, we get
  \begin{align} \label{eq:probmboundone}
    \Pr\bigl[|\tr(U_1 \rho^1_{|\mathbf{X}^k\mathbf{X}^{n/2}}) - f_{\mathbf{X}^k} | \geq \delta \bigr] & \leq 2 k^{\frac{3}{2}} e^{-k \delta^2} \ .
  \end{align}
  Similarly and using $n/2 \geq k$ we find
  \begin{align} \label{eq:probmboundtwo}
   \Pr\bigl[|\tr(V_1 \rho^1_{|\mathbf{X}^k\mathbf{X}^{n/2}}) - f_{\mathbf{X}^{n/2}} | \geq \delta \bigr] & \leq 2 k^{\frac{3}{2}} e^{-k \delta^2} \ .
  \end{align}

  By the definition of the quantity $\gamma_{U_1 \to V_1}$, we have
  \begin{multline*}
    \Pr[f_{\mathbf{X}^{n/2}} > \gamma_{U_1 \to V_1}(f_{\mathbf{X}^{k}} + \delta) + \delta \bigr] \\
  =
    \Pr[\bigl[ \nexists \sigma : \, (\tr(U_1 \sigma) \leq
    f_{\mathbf{X}^k} + \delta) \wedge (\tr(V_1 \sigma) \geq f_{\mathbf{X}^{n/2}} - \delta) \bigr] \\
  \leq
    \Pr\bigl[ 
      (\tr(U_1 \rho^1_{|\mathbf{X}^k\mathbf{X}^{n/2}})  > f_{\mathbf{X}^k} + \delta) \vee
     (\tr(V_1 \rho^1_{|\mathbf{X}^k\mathbf{X}^{n/2}}) < f_{\mathbf{X}^{n/2}} - \delta)
    \bigr] \\
  \leq
    4 k^{\frac{3}{2}} e^{-k \delta^2} \ ,
  \end{multline*}
  where the last inequality follows from~\eqref{eq:probmboundone}
  and~\eqref{eq:probmboundtwo}, and the union bound.

  To conclude the proof, we use the observation that
  \begin{align*}
    f_{X_{k+1} \cdots X_{k+n}} = \frac{1}{2} f_{X_{k+1} \cdots X_{k+n/2}} + \frac{1}{2} f_{X_{k+n/2+1} \cdots X_{k+n}}
  \end{align*}  
  which implies
  \begin{multline*}
    \Pr[f_{X_{k+1} \cdots X_{k+n}} > \gamma_{U_1 \to V_1}(f_{\mathbf{X}^{k}} + \delta) + \delta \bigr] \\
  \leq
    \Pr[f_{X_{k+1} \cdots X_{k+n/2}} > \gamma_{U_1 \to V_1}(f_{\mathbf{X}^{k}} + \delta) + \delta \bigr] \\ \qquad
  +   \Pr[f_{X_{k+n/2+1} \cdots X_{k+n}} > \gamma_{U_1 \to V_1}(f_{\mathbf{X}^{k}} + \delta) + \delta \bigr] \\
  \leq
     8 k^{\frac{3}{2}} e^{-k \delta^2} \ .
  \end{multline*}
\end{proof}

%\bibliography{qkd}

\end{document}